\documentclass[a4paper,runningheads]{llncs}

\usepackage{etoolbox}
\newtoggle{arxiv}
\newtoggle{submission}
\newcommand{\ifarxivelse}[2]{\iftoggle{arxiv}{#1}{\cite[#2]{arxivversion}}}
\toggletrue{arxiv}
\togglefalse{submission}
\iftoggle{submission}{}{
\pagestyle{headings}
\usepackage[paperheight=235mm, paperwidth=155mm,textwidth=12.2cm,textheight=19.3cm,hmarginratio=1:1]{geometry}
}

\usepackage[T1]{fontenc}
\usepackage{amsmath,amssymb}
\usepackage{xspace}
\usepackage{booktabs,multirow}
\usepackage{adjustbox}
\usepackage{cite}
\usepackage{wrapfig}
\usepackage{mathrsfs}
\usepackage{listings}
\usepackage{algorithm, algpseudocode}
\usepackage{graphicx}
\usepackage[%
  bookmarks,
  unicode,
  colorlinks=true,
  allcolors=blue!65!black!90,
  breaklinks=true]{hyperref}
\usepackage{bbding}

\usepackage{enumitem}
\setitemize{noitemsep,topsep=0pt,parsep=0pt,partopsep=0pt,leftmargin=12.0pt}
\setenumerate{noitemsep,topsep=0pt,parsep=0pt,partopsep=0pt}
\setdescription{noitemsep,topsep=2pt,parsep=0pt,partopsep=0pt,leftmargin=12.5pt}

\usepackage{chngcntr}

\makeatletter
\def\orcidID#1{\textsuperscript{\,\smash{\protect\raisebox{-1.25pt}{\href{http://orcid.org/#1}{\protect\includegraphics[scale=.8]{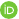}}}}}}
\makeatother

\allowdisplaybreaks %

\usepackage[nameinlink,capitalize,english]{cleveref}
\Crefname{figure}{Fig.}{Figs.}
\crefname{figure}{fig.}{figs.}
\Crefname{tabular}{Tab.}{Tabs.}
\crefname{tabular}{tab.}{tabs.}
\Crefname{section}{Sect.}{Sects.}
\crefname{section}{sect.}{sects.}
\Crefname{appendix}{App.}{Apps.}
\crefname{appendix}{app.}{apps.}
\Crefname{equation}{Eq.}{Eqs.}
\crefname{equation}{eq.}{eqs.}
\creflabelformat{equation}{#2#1#3}
\Crefname{example}{Ex.}{Exs.}
\crefname{example}{ex.}{exs.}

\usepackage{tikz}
\usepackage{tikz-cd}
\tikzset{circled node/.style={circle,draw, inner sep=0, minimum size=1.5em},
nodestyle/.style={circled node,text height=.8em,text depth=.25em}}

\newcommand{\NN}{\ensuremath{\mathbb{N}}\xspace}  %
\newcommand{\RR}{\ensuremath{\mathbb{R}}\xspace}  %

\newcommand{\I}{\ensuremath{\mathscr{I}}\xspace}
\renewcommand{\O}{\ensuremath{\mathscr{O}}\xspace}
\newcommand{\U}{\ensuremath{\mathscr{U}}\xspace}

\newcommand{\1}{\ensuremath{\mathbf{1}}\xspace}
\newcommand{\tool}[1]{\textsc{#1}}
\newcommand{\lang}[1]{\textsc{#1}}

\newcommand{\toolset}{\tool{Modest Toolset}\xspace}

\newcommand{\ie}{i.e.\ }
\newcommand{\etal}{et al.\xspace}

\renewcommand{\iff}{\ensuremath{\Leftrightarrow}\xspace}
\newcommand{\set}[1]{\ensuremath{\{\,#1\,\}}}

\newcommand{\defeq}{\mathrel{\vbox{\offinterlineskip\ialign{\hfil##\hfil\cr{\tiny \rm def}\cr\noalign{\kern0.30ex}$=$\cr}}}}
\renewcommand{\P}{\ensuremath{\mathbf{P}}\xspace}
\newcommand{\M}{\ensuremath{\mathcal{M}}\xspace}

\newcommand{\red}{\ensuremath{\mathit{red}}\xspace}
\newcommand{\abstr}{\ensuremath{\mathit{abstr}}\xspace}
\newcommand{\tail}{\ensuremath{\mathit{tail}}\xspace}
\newcommand{\inp}{\ensuremath{\mathit{inp}}\xspace}
\newcommand{\evt}{\ensuremath{\mathit{evt}}\xspace}
\newcommand{\E}{\mathbb{E}}
\newcommand{\Ut}{\ensuremath{\mathscr{U}\uplus\set{t}}\xspace}

\newcommand{\emphbf}[1]{\textbf{\emph{#1}}}

\makeatletter%
\g@addto@macro\normalsize{%
  \setlength\abovedisplayskip{3pt}%
  \setlength\belowdisplayskip{3pt}%
  \setlength\abovedisplayshortskip{-3pt}%
  \setlength\belowdisplayshortskip{3pt}%
}%
\makeatother

\begin{document}

\title{%
Path Abstraction for Markov Reward Models%
\thanks{
This work was supported
by the EU's Horizon 2020 research and innovation programme under MSCA grant agreement 101008233 (MISSION),
by the Interreg North Sea project STORM\_SAFE,
and
by NWO VIDI grant VI.Vidi.223.110 (TruSTy).
}
}

\author{%
Arnd Hartmanns\orcidID{0000-0003-3268-8674}
\and
Robert Modderman$^{\text{\,(\raisebox{-1.6pt}{\Envelope})}}$\orcidID{0009-0002-9198-3809}
}
\institute{%
University of Twente, Enschede, The Netherlands
$\cdot$ \email{r.modderman@utwente.nl}
}
\authorrunning{A.\ Hartmanns, R.\ Modderman}

\maketitle

\begin{abstract}
Path abstraction originated as a technique for counterexample refinement in probabilistic model checking.
Given a discrete-time Markov chain, it summarises the probabilities passing through a subset of the states onto new transitions of a smaller chain.
In earlier work, we proved its correctness and that it is monotonically absorbing.
In this paper, we extend path abstraction from reachability probabilities on discrete-time Markov chains to expected rewards on Markov reward models.
Working in a novel free monoid view of Markov chains throughout, we prove that path abstraction preserves the Markov reward model structure when abstracting over arbitrary sets of states, and that it remains monotonically absorbing.
Finally, we give a numerical recipe, accompanied by a reference implementation in \mbox{PARI/GP}, that computes path abstraction by solving linear equation systems.
Its correctness rests on the relationship between expected rewards and expected visiting times of transitions.
\end{abstract}

\section{Introduction}

Probabilistic model checking (PMC)~\cite{BAFK18} automatically verifies quantitative properties of systems that exhibit stochastic behaviour, such as randomised algorithms, communication protocols, and fault-tolerant or energy-aware systems.
The most basic formalism that PMC can work on are finite-state discrete-time Markov chains (DTMCs) that step from a state into a discrete probability distribution over successor states.
The most basic PMC query on a DTMC model is for \emph{reachability probabilities}:
the probability of eventually reaching a set of goal states from the initial state.
Annotating the transitions of a DTMC with real-valued rewards yields a \emph{Markov reward model} (MRM).
On MRMs, we can check \emph{expected-reward} properties, for example the expected time to termination, the expected number of messages sent successfully, or the system's expected energy consumption.
Properties can come as queries, asking for the value, or specifications bounding the value by a threshold from above or below.

Counterexamples are a long-standing and important topic in PMC~\cite{ADR08,HKD09}:
when a model violates a specification, a counterexample explains \emph{why}, which is essential for debugging and for building trust in the verdict.
One technique in this area is \emph{path abstraction}, introduced by \'{A}brah\'{a}m \etal~\cite{AJWKB10}, which reroutes the probability mass of all paths through a chosen set of states onto new transitions of a smaller chain.
While created for counterexample refinement~\cite{ADR08}, it also offers an alternative to \emph{state elimination}~\cite{Daw04,HHZ11} for solving a DTMC locally, one set of states at a time.
Whereas state elimination can blow up the number of transitions, path abstraction never increases it---at the cost of the smallest sub-problems remaining linear equation systems, which can however be \emph{much} smaller than for the full DTMC and may therefore be solved quickly, even with exact solvers.

\'{A}brah\'{a}m \etal~\cite{AJWKB10} apply path abstraction recursively over the strongly connected components (SCCs) of the DTMC.
Their paper left some properties that may seem obvious but are crucial for the technique's correctness unproven, in particular that the result of path abstraction is again a DTMC.
In earlier work~\cite{HM25}, we supplied these proofs, and introduced and proved \emph{monotonic absorption}: abstracting over a subset first and then over a superset gives the same result as abstracting over the superset directly.
In the process, we also generalised the technique from abstracting over SCCs to arbitrary sets of non-absorbing states.
A key ingredient of our work was a novel view of DTMCs as structures on the free monoid over their state space, which shortened the definitions and made them more intuitive, and simplified the proofs.
All work on path abstraction, however, was so far confined to unbounded reachability probabilities.

\paragraph{Contributions.}
In this paper, we extend path abstraction from reachability probabilities to \emph{expected rewards} (ERs): concretely, to the expected accumulated reward to reach a goal, one of the standard PMC queries~\cite{HJQW26}.
We prove that path abstraction over arbitrary abstraction sets preserves the MRM structure and satisfies monotonic absorption.
Throughout, we work in the free monoid view, for which we introduce a free monoid definition of MRMs.

Finally, we give a numerical recipe for computing the path abstraction of an MRM over a given abstraction set, with a high-level reference implementation in the computer algebra system \lang{PARI/GP}~\cite{PARI2}.
It translates the abstract formulation into concrete matrix problems---solving linear equation systems, as in PMC for DTMCs.
Closely related to ERs are the \emph{expected visiting times} (EVTs)~\cite{MKQW24} of transitions:
the ER of an MRM equals the sum %
of each transition's EVT multiplied by its reward.
Using this connection, we are able to prove that the recipe's concrete computation matches the abstract formulation of path abstraction.

\section{Background}
\label{sec:Background}

We first introduce free monoids, languages, and basic notions on these that are borrowed from combinatorics on words.\!\footnote{This part of the background is based on~\cite[Sect.\ 2]{HM25},
with small updates to the notation for clarity.
Notably, we use $L_\perp$ to indicate the prefix-minimal part of a language $L$ (instead of $L^\leqslant$), and we use $\red$ to indicate the function that takes a set of letters and a word and outputs a reduced version of its second argument with respect to its first, which was introduced in~\cite{HM25} with the notation of a minus sign.}
Next, we define Markov chains and Markov reward models via the free monoid on the state space.

Throughout, the set of natural numbers \emph{including} $0$ is $\NN=\set{0,1,2,\ldots}$ and \emph{excluding} $0$ is $\NN^+=\set{1,2,\ldots}$.
For a set $X$, we write $2^X$ for its power set.
Given function $f\colon X \to Y$ and $A \subseteq X$, we write $f(A)$ for the image $\set{ f(x) \colon x \in A }$. %

\subsection{Languages and Combinatorics on Words}
We first give a recap on languages and combinatorics on words, with some notions we need for our reasoning with path abstraction for MRMs.
For the definitions, we use terminology from~\cite{Lot02}, a standard work on combinatorics on words and theory of languages.
Throughout, let $\Sigma$ be a non-empty set.
\begin{definition}%
\label{def:free-monoid-languages}
    The set of all finite sequences with symbols in $\Sigma$ is denoted by $\Sigma^*$ and is called the \emphbf{free monoid} \emph{on} $\Sigma$.
    Any subset $L$ of $\Sigma^*$ is called a \emphbf{language} over \emph{alphabet} $\Sigma$.
    Elements of $\Sigma^*$ are called \emph{words} over $\Sigma$.
    The symbols of a word $x\in\Sigma^*$ as elements of $\Sigma$ are called the \emph{letters} of $x$.
    The \emph{empty word} of $\Sigma^*$, \ie the empty sequence, is denoted by $\varepsilon$.
    A word $x\in\Sigma^*$ can be seen as the singleton language $\set{x}$.
    Words of length 1 can and will be identified with letters.
\end{definition}
Given two words $x,y\in\Sigma^*$, we denote the \textbf{concatenation} of $x$ \emph{with} $y$ by $xy$.
Given two languages $K,L\subseteq\Sigma^*$, we can form the concatenation of $K$ and $L$ as $KL\defeq\set{xy\colon x\in K\land y\in L}$.
The \textbf{length} of a word $x\in\Sigma^*$ is denoted by $|x|$.

\begin{definition}
\label{def:path-concatenation}
    For $x\in\Sigma^*$, let $\tail(x)$ denote its \emphbf{tail}, \ie we set $\tail(\varepsilon)\defeq\varepsilon$ and for $a\in\Sigma$ and $x\in\Sigma^*$, we set $\tail(ax)\defeq x$.
    Given $K,L\subseteq\Sigma^*$, we define $K\star L= K(\tail(L))$ and call $K\star L$ the \emphbf{path concatenation} of $K$ with $L$.
\end{definition}
The reason why $\star$ is called \emph{path concatenation} is because of its relation to graph theory: if e.g. in a graph $G=(V,E)$, $\pi v$ and $v\pi'$ are two paths with $\pi,\pi'\in V^*$ and $v\in V$, then we obtain their path concatenation as $\pi v\star v\pi'=\pi v\pi'$.

\begin{definition}
Let $x,x'\in\Sigma^*$. Then, $x'$ is a \emphbf{prefix} of $x$, denoted by $x'\leqslant x$, if there exists $y\in\Sigma^*$ such that $x=x'y$.
If $x'\leqslant x$ but $x'\neq x$, then $x'$ is a \emph{strict prefix} of $x$, which is denoted by $x'<x$.
The order $\leqslant$ on $\Sigma^*$, which is a partial order, is the \emphbf{prefix order} on $\Sigma^*$.
\end{definition}

\begin{definition}
    Given $L\subseteq\Sigma^*$ and $n\in\NN$, we can define its $n$\emphbf{-th power} $L^n$ as the $n$-fold concatenation $LL\cdots L$ of $L$ with itself,
    where $L^0\defeq\set{\varepsilon}$.
    We let $L^*$ denote the \emphbf{Kleene-Star} or \emph{Kleene closure} of $L$, defined by $L^*=\bigcup_{n\in\NN}L^n$; and we define $L^+=\bigcup_{n\in\NN^+}L^n$.
\end{definition}

\begin{definition}
\label{def:prefix-minimal-part}
    Given a language $L\subseteq\Sigma^*$, we can form its \emphbf{prefix-minimal part} $L_\perp\defeq\set{x\in L\colon x'<x\Rightarrow x'\notin L}$.
    We say that a language is a \emphbf{prefix code} if it coincides with its prefix-minimal part; \ie $L$ is a prefix code if $L=L_\perp$.
\end{definition}

\begin{definition}
    Let $x,x'\in\Sigma^*$. Then, $x'$ is a \emphbf{factor} of $x$, denoted by $x'\sqsubseteq x$, if there exist $y,z\in\Sigma^*$ such that $x=yx'z$.
\end{definition}

\begin{definition}
\label{def:factor-occurrences}
    Let $x,x'\in\Sigma^*$.
    Then, we let $|x|_{x'}$ denote the number of \emphbf{factor occurrences} in $x$:
    $|x|_{x'}\defeq|\set{(y,z)\in\Sigma^*\times\Sigma^*\colon x=yx'z}|$.\!\footnote{Note that $|x|_{x'}\in\NN$ as $x$ is finite, $x'\sqsubseteq x\Leftrightarrow |x|_{x'}>0$,
and $|x|=\sum_{a\in\Sigma}|x|_a$.}
\end{definition}

\begin{example}
    We have $|aabab|_{ab}=2$, $|aaabaababb|_{aa}=3$, $|x|_\varepsilon=|x|+1$, $x'\leqslant x\Rightarrow |x|_{x'}>0$, and $|a^n|_{a^m}=\max(0,n-m+1)$.
\end{example}
Now, we define a function $\red$, which we call the \textbf{reduction operator}, that takes a word $x$ and a set $A\subseteq\Sigma$ of letters and outputs a ``reduced'' version $\red(A,x)$ of $x$ relative to $A$.
In~\cite{HM25}, $\red(A,x)$ was written as $x-A$.

\begin{definition}
    We define $\red\colon2^\Sigma\times\Sigma^*\to\Sigma^*$ inductively with respect to its second argument with two base cases. Given $A\subseteq\Sigma$, we set
    $\red(A,\varepsilon)\defeq\varepsilon$, $\red(A,a)\defeq a$ for all $a\in\Sigma$,
    and for all $a,b\in\Sigma$ and $x\in\Sigma^*$ we set
    \begin{equation}
        \red(A,abx)=\begin{cases}
            \red(A,ax),&\text{if }a\in A\land b\in A,\\
            a(\red(A,bx)),&\text{otherwise}.
        \end{cases}
    \end{equation}
    We write $\red(A)$ for the function $x\mapsto\red(A,x)$ in $\Sigma^*\to\Sigma^*$.
    For $L\subseteq\Sigma^*$ we write $\red(A)^{-1}(L)$ for the inverse image $\set{x\in\Sigma^*\colon\red(A,x)\in L}$ of $L$.
\end{definition}
Given $A$, $\red(A, x)$ takes all factors of $x$ in $A^*$ that are maximal in length and replaces these with their respective first letters. In particular, $\red(\Sigma)$ reduces any non-empty $x\in\Sigma^*$ to its first letter and leaves $\varepsilon$ unchanged.
Note furthermore that $\red(A,x)$ does not contain two consecutive symbols of $A$.

\begin{example}
    Over $\Sigma=\set{a,b,c,\ldots,z}$,
    we have $\red(\set{a,b})(a^+bc)=ac$,\\
    $\red(\set{p,i,n,g},mapping)=map$,
    and $\red(\varnothing,x)=x$ for all $x\in\Sigma^*$.
\end{example}

\begin{example}
\label{ex:red-2state-char}
We can characterise $L'=\red(A)^{-1}(L)$ for the case $L$ is a word $ab$ of length 2.
Let $A\subseteq\Sigma$. Then, we have $L'=\varnothing$ if $a,b\in A$; $L'=aA^*b$ if $a\in A\not\ni b$; $L'=abA^*$ if $a\notin A\ni b$; and $L'=\set{ab}$ if $a,b\notin A$.
\end{example}
We restate two key results on the function $\red$ from~\cite{HM25}, namely (a slight extension of)~\cite[Thm. 1.2]{HM25} and~\cite[Thm. 1.4]{HM25}.\!\footnote{The proofs also hold for the case $\Sigma$ is infinite: the assumption in~\cite[Thm 1.2~\&~1.4]{HM25} that $\Sigma$ should be finite was never used and thus is superfluous.}

\begin{theorem}
\label{thm:inverse-images-red-perp-star}
    Let $A\subseteq\Sigma$, $x,y\in\Sigma^*$, and $a\in\Sigma$. Then, we have
    \begin{equation}
    \label{eq:inverse-images-red-perp-star-id}
        (\red(A)^{-1}(xa))_\perp\star(\red(A)^{-1}(ay))_\perp=(\red(A)^{-1}(xay))_\perp,
    \end{equation}
    \begin{equation}
    \label{eq:inverse-images-red-perp-star-link}
        (\red(A)^{-1}(xa))_\perp\subseteq \Sigma^*a\text{, and }
        (\red(A)^{-1}(ay))_\perp\subseteq a\Sigma^*.
    \end{equation}
    \begin{proof}
        We note that~\eqref{eq:inverse-images-red-perp-star-id} is immediate from~\cite[Thm. 1.2]{HM25}, and~\eqref{eq:inverse-images-red-perp-star-link} is established straightforwardly using the definitions of $\red$ and $\perp$.
    \end{proof}
\end{theorem}

\begin{theorem}
\label{thm:inverse-images-red-union}
    Let $A\subseteq B\subseteq\Sigma$. Then, for all $x\in\Sigma^*$ we have
    \begin{equation}
        (\red(B)^{-1}(x))_\perp=\biguplus_{y\in(\red(B)^{-1}(x))_\perp\land\red(A,y)=y}(\red(A)^{-1}(y))_\perp.
    \end{equation}
\end{theorem}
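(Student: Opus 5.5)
The natural route is to factor the reduction through the smaller set, \ie to establish $\red(B)\circ\red(A)=\red(B)$ for $A\subseteq B$, and then to show that prefix-minimality is compatible with this factorisation.

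\emph{Step 1 (factorisation).} We show $\red(B,\red(A,w))=\red(B,w)$ for all $w\in\Sigma^*$ by induction on $|w|$ along the recursive definition of $\red$. If the first two letters $a,b$ of $w=abx$ both lie in $A$, then both sides reduce to the case $w=ax$, since $a,b\in B$ as well. Otherwise the first letter $a$ is kept by $\red(A)$, and we compare the $B$-step applied to $a(\red(A,bx))$ with the $B$-step applied to $abx$. Here we use that $\red(A,bx)$ starts with $b$, so the ``both in $B$'' test gives the same outcome on both sides. Idempotence $\red(A,\red(A,w))=\red(A,w)$ is the special case $A=B$.

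\emph{Step 2 (the sets are disjoint and cover the right class).} Put $L=\red(B)^{-1}(x)$. By Step 1, every $w\in L$ satisfies $y\defeq\red(A,w)\in L$ and $\red(A,y)=y$. The sets $\red(A)^{-1}(y)$ for distinct $y$ are disjoint, because $\red(A)$ is a function; this justifies the $\uplus$. The remaining claim is
\[
L_\perp=\bigcup_{y\in L_\perp,\ \red(A,y)=y}(\red(A)^{-1}(y))_\perp .
\]

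\emph{Step 3 (the key lemma: prefixes).} We need a monotonicity fact. If $w'\leqslant w$, then $\red(A,w')\leqslant\red(A,w)$, or more precisely $\red(A,w)=\red(A,w')u$ with the caveat that a trailing $A$-block may be absorbed. Conversely, each prefix $y'$ of $\red(A,w)$ equals $\red(A,w')$ for some prefix $w'\leqslant w$; one may take $w'$ minimal.

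\emph{Step 4 ($\supseteq$).} Let $y\in L_\perp$ with $\red(A,y)=y$, and let $w\in(\red(A)^{-1}(y))_\perp$. Then $w\in L$ by Step 1. Suppose $w'<w$ with $w'\in L$. By Step 3, $y'\defeq\red(A,w')\leqslant y$ and $y'\in L$, so minimality of $y$ forces $y'=y$. This contradicts the minimality of $w$ in $\red(A)^{-1}(y)$.

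\emph{Step 5 ($\subseteq$).} Let $w\in L_\perp$ and $y=\red(A,w)\in L$. If some $y'<y$ lies in $L$, Step 3 yields $w'\leqslant w$ with $\red(A,w')=y'$, hence $w'\in L$; and $w'\neq w$ because their images differ. This contradicts $w\in L_\perp$, so $y\in L_\perp$. Likewise, if some $w'<w$ had $\red(A,w')=y$, then $w'\in L$, again a contradiction. Hence $w\in(\red(A)^{-1}(y))_\perp$.

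The main obstacle is Step 3. Prefixes of $w$ that end inside a maximal $A$-block map to the same reduced word, so the correspondence between prefixes is not injective, and its order-preservation has to be stated carefully. I would first prove by induction that $w\mapsto\red(A,w)$ is monotone for $\leqslant$ and that its image on the prefixes of $w$ is exactly the set of prefixes of $\red(A,w)$. Everything above then follows formally.
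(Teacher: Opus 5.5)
Your proof is correct. The paper does not prove this statement itself: it restates it from \cite[Thm.~1.4]{HM25}, adding only a footnote that finiteness of $\Sigma$ is not needed. So you are supplying a self-contained proof, not a variant of one in the paper. Its structure works. Once you have the factorisation $\red(B)\circ\red(A)=\red(B)$ (with idempotence as the case $A=B$), monotonicity of $\red(A)$ for $\leqslant$, and the fact that $\red(A)$ maps the prefixes of $w$ onto the prefixes of $\red(A,w)$, the disjointness and both inclusions in Steps~4 and~5 follow exactly as you write them. Finiteness of $\Sigma$ is never used, which matches that footnote.

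The one place where your sketch does not close as written is the induction in Step~1. The problem is the subcase where $a,b$ are not both in $A$ but are both in $B$. Write $\red(A,bx)=bz$. Both sides do take the first branch of the $B$-recursion, but you are then left comparing $\red(B,az)$ with $\red(B,ax)$. That is not an instance of the induction hypothesis, because $az$ need not equal $\red(A,ax)$: for $a\in B\setminus A$, $b,c\in A$ and $x=c$ you get $az=a$ but $\red(A,ax)=ac$.

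You can repair this with the auxiliary identity $\red(B,abu)=a\,\tail(\red(B,bu))$ for $a,b\in B$ and all $u$; both sides collapse the leading $B$-block to its first letter. It gives $\red(B,abz)=a\,\tail(\red(B,bz))=a\,\tail(\red(B,bx))=\red(B,abx)$, using the induction hypothesis for $bx$. Alternatively, argue directly that each maximal $A$-block lies inside a maximal $B$-block, so collapsing it neither merges nor removes $B$-blocks nor changes their first letters.

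For Step~3, the invariant that makes the induction immediate is this: $\red(A,wc)=\red(A,w)$ if $w$ ends in a letter of $A$ and $c\in A$, and $\red(A,wc)=\red(A,w)c$ otherwise. Monotonicity and prefix-surjectivity then follow by appending one letter at a time.
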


\subsection{Discrete-Time Markov Chains and Markov Reward Models}

We start with a standard automata-based definition of discrete-time Markov chains.
Throughout, we let $S$ be a finite, non-empty set.

\begin{definition}
\label{def:DTMC}
    A \emphbf{discrete-time Markov chain} (DTMC) with \emph{state space} $S$ is a triple $M=(S,s_0,\P)$ where $s_0\in S$ is called the \emph{initial state} of $M$, and $\P\colon S\times S\to[0,1]$ is a function satisfying $\sum_{t\in S}\P(s,t)=1$ for all $s\in S$, called the \emph{transition probability function} of $M$.
\end{definition}
Within PMC, the requirement that $\sum_{t\in S}\P(s,t)=1$ holds for all $s\in S$ is often too strict, and is in many cases relaxed to $\sum_{t\in S} \P(s,t)\leqslant1$, which yields the definition of \emph{substochastic} DTMC.
In this paper, we also do this and henceforth use the term \textbf{Markov chain} to refer to substochastic DTMCs.

\begin{definition}
\label{def:Markov-chain}
    For a Markov chain $M=(S,s_0,\P)$, a pair $(s,t)\in S \times S$ is a \emphbf{transition} in $M$ if $\P(s,t)>0$.
    A finite sequence $\pi=(s_1,s_2)(s_2,s_3)\ldots(s_l,s_{l+1})$ of $l\geqslant1$ transitions is a \emphbf{path} in $M$.
    The \emph{probability} of $\pi$ in $M$ is defined as the product of the probabilities of its transitions, namely $\prod_{i=1}^l\P(s_i,s_{i+1})$.
\end{definition}
It would be useful if we could denote the probability of path $\pi$ simply by $\P(\pi)$.
By identifying a path in $(S\times S)^*$ of the form $(s_1,s_2)(s_2,s_3)\cdots(s_l,s_{l+1})$
with the sequence in $S^*$ of the form $s_1\cdots s_{l+1}$, we can assign the probability of the path to the corresponding sequence (and probability 0 to sequences that are not paths) and so extend $\P$ to $\set{x\in S^*\colon|x|\geqslant2}\to[0,1]$.
To extend $\P$ to $S^+\to[0,1]$, we also set $\P(s)\defeq1$ for all $s\in S$.
For all $x,y\in S^*$ and $s\in S$ we then have $\P(xsy)=\P(xs\star sy)=\P(xs)\P(sy)$,
but also that with this rule one can show by induction on $l$ that $\P(s_1s_2\cdots s_{l+1})=\prod_{i=1}^l\P(s_is_{i+1})$ for all $l\in\NN^+$.

With this alternative notation for transition probability functions, i.e. those that map sequences contained in the \emph{free monoid} on the state space $S$ to $[0,1]$ directly, we can define Markov chains in the free monoid setting.

\begin{definition}
\label{def:Markov-chain-free-monoid}
    A \emphbf{Markov chain} on \emph{state space} $S$ is a triple $M=(S,s_0,\P)$ where $s_0\in S$ is called the \emph{initial state} of $M$, and $\P\colon S^+\to[0,1]$ is a function satisfying $\sum_{t\in S}\P(st)\leqslant1$ and $\P(s)=1$ for all $s\in S$, and $\P(xsy)=\P(xs)\P(sy)$ for all $x,y\in S^*$ and $s\in S$, called the \emph{transition probability function} of $M$.
\end{definition}
A function $\P\colon S^+\to[0,1]$ qualifying as a transition probability function in~\Cref{def:Markov-chain-free-monoid} can be extended to $2^{S^+}\to[0,\infty)$ by setting, for all $L\subseteq S^+$,
\begin{equation}
\label{eq:prob-mass}
    \P(L)\defeq\sum_{x\in L_\perp}\P(x),
\end{equation} which we call the \textbf{probability mass} of $L$ \emph{with respect to} $\P$.\!\footnote{The reason why we do not distinguish between languages with coinciding prefix-minimal parts is that the \emph{event} $x'\in S^+$ is implied by the event $x$ whenever $x'\leqslant x$.}

\begin{definition}
\label{def:MRM}
    A \emphbf{Markov reward model} (MRM) is a tuple $(S,s_0,\P,R)$ where $(S,s_0,\P)$ is a Markov chain as in~\Cref{def:Markov-chain-free-monoid} and $R$ is a function $S^+\to\RR$ satisfying $R(x)=0$ for all $x\in S^+$ for which $|x|=1$ or $\P(x)=0$,
    and $R(xsy)=R(xs)+R(sy)$ for all $x,y\in S^*$ and $s\in S$ for which $\P(xsy)>0$.
    The function $R$ is called a \emphbf{reward function} for the Markov chain $(S,s_0,\P)$.
\end{definition}
This free-monoid definition of MRMs is new.
As path abstraction works on transitions, we w.l.o.g.\ associate rewards to transitions, not states, here.

\begin{remark}
\label{rem:correspondence-free-monoid-classical}
    A Markov chain or MRM in the free monoid setting corresponds to exactly one in the classical setting---\ie where $\P$ has signature $S\times S\to[0,1]$ and $R\colon S \times S \to \RR$.
    This is due to the fact that in the free monoid setting, we let $\P$ and $R$ carry the inherent structure that its values on $xsy$ (where $x,y\in S^*$ and $s\in S$) can be expressed in terms of those on $xs$ and $sy$---multiplicatively and additively, respectively.
\end{remark}

\subsection{Probabilistic Model Checking}
\label{sec:PMC}

Given a set of goal states $G \subseteq S \setminus\set{s_0}$, we are interested in the probability of reaching $G$ from $s_0$, and in the expected reward accumulated before reaching~$G$~\cite[Sect.\ 10]{BK08}.
We w.l.o.g.\ assume that $s \in G \iff \P(ss)=1$, \ie the goal states are exactly the \emph{absorbing} states.
Let $A = \set{s\in S\colon\P(ss)<1} = S \setminus G$.
Then
$$\P^{s_0}(\diamondsuit\, t) \defeq \P(s_0A^*t)$$
is the \textbf{reachability probability} of $t$, and
$$\E^{s_0}(\diamondsuit\, t)\defeq\textstyle\sum_{x\in s_0A^*t}\P(x)R(x)$$
is the \textbf{expected reward} (ER) to reach $t$.
Note that paths not reaching $t$ are not considered in the sum, so they effectively contribute reward $0$ to the ER.\footnote{There are three common ways to handle paths that do not reach the goal in PMC: assign reward $0$, reward $\infty$, or the \emph{infinite} path's cumulative reward~\cite[Sect.\ 4.1]{Sim14}.}
\textbf{Solving} or (fully) \textbf{model checking} an MRM %
means calculating the reachability probability and ER for all paths from $s_0$ to \emph{any} absorbing~state.

\section{Path Abstraction}

We recall path abstraction as defined in~\cite[Def. 6]{HM25}, then we extend it to MRMs.

\subsection{Path Abstraction for Markov chains and MRMs}

Let us first reformulate~\cite[Def.\ 6]{HM25} using our slightly updated notation.

\begin{definition}
\label{def:Markov-chain-path-abstr}
    Given $A\subseteq S$ and a Markov chain $M=(S,s_0,\P)$, we define the \emphbf{path abstraction} $\abstr(A,M)$ of $M$ \emph{with respect to} $A$ as follows.
    We set $\inp(M,A)\defeq\set{s\in A\colon s=s_0\lor\P((S\setminus A)s)>0}$ which we call the \emph{input states} of $A$ with respect to $M$---\ie those states of $A$ that are reachable from outside in one step, including $s_0$ if $s_0\in A$.
    Then, we set $\abstr(A,M)\defeq(S,s_0,\P')$ where
    we define $\P'\colon S^+\to[0,1]$ by
    \begin{equation}
    \label{eq:Markov-chain-path-abstr}
        \P'(x)\defeq\begin{cases}
            1,&\text{if }|x|=1,\\
            \P(\red(A)^{-1}(x)),&\text{if }|x|\geqslant2\land x\in(S\setminus(A\setminus\inp(M,A)))^*,\\
            0,&\text{otherwise}.
        \end{cases}
    \end{equation}
\end{definition}
Intuitively, path abstraction removes all transitions that come from or lead to a non-input state in $A$, and instead adds new transitions that summarise the behaviour inside $A$ directly connecting the input states to the output states of $A$ (\ie those outside $A$ that had an incoming transition from $A$).
Now for MRMs:

\begin{definition}
\label{def:MRM-path-abstr}
    Given $A\subseteq S$ and an MRM $\M=(S,s_0,\P,R)$, we define the \emphbf{path abstraction} $\abstr(A,\M)$ of $\M$ \emph{with respect to} $A$ as follows.
    Writing $M=(S,s_0,\P)$ for the underlying Markov chain of $\M$ and $(S,s_0,\P')$ for the path abstraction $\abstr(A,M)$ of $M$ with respect to $A$ as defined in~\Cref{def:Markov-chain-path-abstr}, we set $\abstr(A,\M)\defeq(S,s_0,\P',R')$ where we define $R'\colon S^+\to\RR$ through
    \begin{equation}
    \label{eq:reward-function}
        R'(x)\defeq\begin{cases}
            \frac{1}{\P'(x)}\sum_{y\in(\red(A)^{-1}(x))_\perp}\P(y)R(y)
            &\text{if }|x|>1\land\P'(x)>0\\
            0&\text{otherwise}.
        \end{cases}
    \end{equation}
\end{definition}
On an MRM, path abstraction takes the \emph{expected} reward (\ie rewards multiplied by probabilities, summed) corresponding to passing through the abstraction set from an input to an output state, \emph{normalises} it by dividing by the (internal) probability of this set of paths, and uses the result as the reward for the new transition.
Moving these \emph{normalised} ERs, instead of the full ERs, onto the new transitions is the key ingredient as to why our new path abstraction works for ERs and satisfies the desired monotonic absorption property. %
\begin{example}
    Consider MRM $\M$ in~\Cref{subfig:M} on state space $S=\set{s_1,\ldots,s_8}$. %
    A transition's probability $p$ and reward $r$ are written as $p;r$.
    Let us abstract $\M$ over $A = \set{s_2,s_5,s_6}$.
    The single input state is $s_2$, and the output states are $s_3$ and $s_8$.
    Effectively, to perform the path abstraction is then to locally model-check the sub-chain with state space $A \cup \set{s_3, s_8}$ indicated with dashed lines in~\Cref{subfig:M}.
    The result is the (smaller) chain with state space $\set{s_2,s_3,s_8}$ embedded in $\M_1$ in~\Cref{subfig:M-1}.
    We further abstract over $\set{s_3, s_4}$ to obtain $\M_{1,2}$ of~\Cref{subfig:M-12}, and then over $\set{s_1, s_2, s_3}$ to end up with $\M_{1,2,3}$ of~\Cref{subfig:M-123}.
    Here, we can directly read off the probability $\frac{5}{9}$ of reaching $s_7$, or the ER $\frac{5}{9} \cdot \frac{142}{5} + \frac{4}{9} \cdot 25 = \frac{242}{9}$ to reach $\set{s_7, s_8}$.
\end{example}

\begin{figure}[t]
\centering
\vspace{-1em}
\begin{minipage}[b]{0.5\textwidth}
\centering
\begin{tikzpicture}
\draw
(0.3, 1) node(ghost1){}
(1, 1) node[circled node](s1){$s_1$}
(2.4, 1) node[circled node, fill=black!15](s2){$s_2$}
(1, 2.4) node[circled node](s3){$s_3$}
(2.4, 2.4) node[circled node](s4){$s_4$}
(2.4, -0.4) node[circled node,  fill=black!15](s5){$s_5$}
(3.8, -0.4) node[circled node,  fill=black!15](s6){$s_6$}
(3.8, 2.4) node[circled node](s7){$s_7$}
(3.8, 1) node[circled node](s8){$s_8$};
\draw
(4.8, 2.4) node{$1;2$}
(4.8, 1.0) node{$1;4$}
(4.1, 0.3) node{$\frac{1}{4};8$}
(3.2, -0.9) node{$\frac{1}{2};2$}
    (2.8, 0.0) node{$1;2$}
(3.15, 0.7) node{$\frac{1}{4};1$}
(3.5, 1.8) node{$\frac{1}{12};3$}
(3.1, 2.7) node{$\frac{1}{6};4$}
(1.4, 2.85) node{$\frac{3}{4};5$}
(1.8, 1.95) node{$1;1$}
(2.1, 0.3) node{$\frac{1}{3};3$}
(1.4, 1.6) node{$\frac{2}{3};2$}
(0.7, 1.7) node{$\frac{1}{6};3$}
(1.6, 0.7) node{$\frac{5}{6};2$};
\draw[-stealth] (ghost1) -- (s1);
\draw[-stealth] (s1) to[out = 0, in = 180] (s2);
\draw[-stealth] (s1) -- (s3);
\draw[-stealth] (s2) -- (s3);
\draw[-stealth] (s2) -- (s5);
\draw[-stealth] (s4) -- (s7);
\draw[-stealth] (s4) -- (s8);
\draw[-stealth] (s6) -- (s2);
\draw[-stealth] (s3) to[out = -30, in = -150] (s4);
\draw[-stealth] (s4) to[out = 150, in = 30] (s3);
\draw[-stealth] (s5) to[out = 30, in = 150] (s6);
\draw[-stealth] (s6) to[out = -150, in = -30] (s5);
\draw[-stealth] (s6) -- (s8);
\draw[-stealth] (s7) to[out = 30, in = -30, looseness = 8] (s7);
\draw[-stealth] (s8) to[out = 30, in = -30, looseness = 8] (s8);
\draw[densely dotted, line width=0.6pt]
(1.325, -1.2)--(5.2, -1.2);
\draw[densely dotted, line width=0.6pt]
(5.2,-1.2)--(5.2,1.4);
\draw[densely dotted, line width=0.6pt]
(5.2,1.4)--(2.8,1.4);
\draw[densely dotted, line width=0.6pt]
(2.8,1.4)--(1.8,2.4);
\draw[densely dotted, line width=0.6pt]
(1.8,2.4)--(1.8,3.2);
\draw[densely dotted, line width=0.6pt]
(1.8,3.2)--(0.2,3.2);
\draw[densely dotted, line width=0.6pt]
(0.2,3.2)--(0.2,1.325);
\draw[densely dotted, line width=0.6pt]
(0.2,1.325)--(1.325,1.325);
\draw[densely dotted, line width=0.6pt]
(1.325,1.325)--(1.325,-1.2);
\end{tikzpicture}
\vspace{-6pt}
\caption{MRM $\M$}
\label{subfig:M}
\end{minipage}\hfill
\begin{minipage}[b]{0.49\textwidth}
\centering
\begin{tikzpicture}
\draw
(0.3, 1) node(ghost1){}
(1, 1) node[circled node](s1){$s_1$}
(2.4, 1) node[circled node](s2){$s_2$}
(1, 2.4) node[circled node, fill=black!15](s3){$s_3$}
(2.4, 2.4) node[circled node, fill=black!15](s4){$s_4$}
(3.8, 2.4) node[circled node](s7){$s_7$}
(3.8, 1) node[circled node](s8){$s_8$};
\draw
(2.95, 0.7) node{$\frac{1}{5};19$}
(4.8, 2.4) node{$1;2$}
(4.8, 1.0) node{$1;4$}
(3.5, 1.8) node{$\frac{1}{12};3$}
(3.1, 2.7) node{$\frac{1}{6};4$}
(1.4, 2.85) node{$\frac{3}{4};5$}
(1.8, 1.95) node{$1;1$}
(1.5, 1.5) node{$\frac{4}{5};4$}
(0.7, 1.5) node{$\frac{1}{6};3$}
(1.6, 0.7) node{$\frac{5}{6};2$};
\draw[-stealth] (ghost1) -- (s1);
\draw[-stealth] (s1) to[out = 0, in = 180] (s2);
\draw[-stealth] (s1) -- (s3);
\draw[-stealth] (s2) -- (s3);
\draw[-stealth] (s4) -- (s7);
\draw[-stealth] (s4) -- (s8);
\draw[-stealth] (s3) to[out = -30, in = -150] (s4);
\draw[-stealth] (s4) to[out = 150, in = 30] (s3);
\draw[-stealth] (s7) to[out = 30, in = -30, looseness = 8] (s7);
\draw[-stealth] (s8) to[out = 30, in = -30, looseness = 8] (s8);
\draw[-stealth] (s2)--(s8);
\draw (0.3,-1.1) node(ghost2){};
\draw[densely dotted, line width=0.6pt]
(5.2,3.2)--(0.4,3.2);
\draw[densely dotted, line width=0.6pt]
(0.4,3.2)--(0.4,1.75);
\draw[densely dotted, line width=0.6pt]
(0.4,1.75)--(2.5,1.75);
\draw[densely dotted, line width=0.6pt]
(2.5,1.75)--(3.75,0.5);
\draw[densely dotted, line width=0.6pt]
(3.75,0.5)--(5.2,0.5);
\draw[densely dotted, line width=0.6pt]
(5.2,0.5)--(5.2,3.2);
\end{tikzpicture}
\vspace{-6pt}
\caption{$\M_1=\abstr(\set{s_2,s_5,s_6},\M)$}
\label{subfig:M-1}
\end{minipage}\\[8pt]
\begin{minipage}[b]{0.49\textwidth}
    \centering
    \begin{tikzpicture}
    \draw
    (0.3, 1) node(ghost1){}
    (1, 1) node[circled node, fill=black!15](s1){$s_1$}
    (2.4, 1) node[circled node, fill=black!15](s2){$s_2$}
    (2.4, 2.4) node[circled node, fill=black!15](s3){$s_3$}
    (3.8, 2.4) node[circled node](s7){$s_7$}
    (3.8, 1) node[circled node](s8){$s_8$};
    \draw
    (3.1, 0.7) node{$\frac{1}{5};19$}
    (4.8, 2.4) node{$1;2$}
    (4.8, 1.0) node{$1;4$}
    (3.1, 2.7) node{$\frac{2}{3};23$}
    (3.4, 1.9) node{$\frac{1}{3};22$}
    (1.45, 1.9) node{$\frac{1}{6};3$}
    (2.1, 1.6) node{$\frac{4}{5};4$}
    (1.6, 0.7) node{$\frac{5}{6};2$};
    \draw[-stealth] (ghost1) -- (s1);
    \draw[-stealth] (s1) to[out = 0, in = 180] (s2);
    \draw[-stealth] (s1) -- (s3);
    \draw[-stealth] (s2) -- (s3);
    \draw[-stealth] (s7) to[out = 30, in = -30, looseness = 8] (s7);
    \draw[-stealth] (s8) to[out = 30, in = -30, looseness = 8] (s8);
    \draw[-stealth] (s2)--(s8);
    \draw[-stealth] (s3)--(s8);
    \draw[-stealth] (s3)--(s7);
    \draw[densely dotted, line width=0.6pt]
    (5.2,3)--(0.2,3);
    \draw[densely dotted, line width=0.6pt]
    (0.2,3)--(0.2,0.4);
    \draw[densely dotted, line width=0.6pt]
    (0.2,0.4)--(5.2,0.4);
    \draw[densely dotted, line width=0.6pt]
    (5.2,0.4)--(5.2,3);
\end{tikzpicture}
\vspace{-6pt}
\caption{$\M_{1,2}=\abstr(\set{s_3,s_4},\M_1)$}
\label{subfig:M-12}
\end{minipage}\hfill
\begin{minipage}[b]{0.49\textwidth}
    \centering
    \begin{tikzpicture}
    \draw
    (0.3, 1) node(ghost1){}
    (1, 1) node[circled node](s1){$s_1$}
    (2.4, 2.4) node[circled node](s7){$s_7$}
    (2.4, 1) node[circled node](s8){$s_8$};
    \draw
    (3.4, 2.4) node{$1;2$}
    (3.4, 1.0) node{$1;4$}
    (1.3, 1.9) node{$\frac{5}{9};\frac{142}{5}$}
    (1.6, 0.7) node{$\frac{4}{9};25$};
    \draw[-stealth] (ghost1) -- (s1);
    \draw[-stealth] (s1) to[out = 0, in = 180] (s2);
    \draw[-stealth] (s1) -- (s3);
    \draw[-stealth] (s7) to[out = 30, in = -30, looseness = 8] (s7);
    \draw[-stealth] (s8) to[out = 30, in = -30, looseness = 8] (s8);
\end{tikzpicture}
\vspace{-6pt}
\caption{$\M_{1,2,3}=\abstr(\set{s_1,s_2,s_3},\M_{1,2})$}
\label{subfig:M-123}
\end{minipage}
\end{figure}

\subsection{Path Abstraction as Probabilistic Model Checking}

Recall the PMC problem of \Cref{sec:PMC}:
Given the absorbing states as goal states, compute $\P^{s_0}(\diamondsuit\, t)$ or $\E^{s_0}(\diamondsuit\, t)$ for all such states~$t$.
Thus, to perform PMC via path abstraction, we only take abstraction sets not containing absorbing states.
Path abstraction for MRMs then stores exactly these quantities in the (fully) path-abstracted MRM $\M'=\abstr(\set{s\in S\colon\P(ss)<1},\M)$.
Indeed: by~\Cref{def:Markov-chain-path-abstr,def:MRM-path-abstr}, for every absorbing $t$ with $\P^{s_0}(\diamondsuit\,t)>0$ we store $\P^{s_0}(\diamondsuit\,t)$ as a probability and $\E^{s_0}(\diamondsuit\, t)/\P^{s_0}(\diamondsuit\,t)$ as a reward onto the transition $s_0\to t$ in $\M'$.

To prove correctness of path abstraction as a means to fully model check an MRM with intermediate MRM, we thus wish to prove that path abstracting (1)~an MRM outputs another MRM, and (2)~preserves reachability probabilities and ERs.
In \Cref{sec:path-abstr} below, we establish (2) by proving in~\Cref{thm:MRM-monab} that path abstraction for MRMs is monotonically absorbing, after establishing (1), the analogue of which for Markov chains was done in~\cite{HM25}.
The fact that we can obtain intermediate MRMs then enables us to model check an MRM with \emph{counterexample refinement}.

\begin{example}
A \emph{counterexample} to the formula $\E^{s_0}(\diamondsuit s_7)\leqslant10$ to reach $s_7$ in $\M$ in~\Cref{subfig:M} is given by the path $s_1s_2s_3s_7$ in the intermediate $\M_{1,2}$, where $\M_{1,2,3}$ is the resulting MRM of fully model checking $\M$.
The probability of $s_1s_2s_3s_7$ in $\M_{1,2}$ is $\frac{5}{6}\cdot\frac{4}{5}\cdot\frac{2}{3}=\frac{4}{9}$ and has reward $2+4+23=29$, and thus the ER to reach $s_7$ in $\M_{1,2}$ is $\geqslant\frac{4}{9}\cdot 29>10$ and hence the ER to reach $s_7$ in the original MRM $\M$ is $>10$ as well, as path abstraction preserves total expected accumulated rewards.
\end{example}

\section{Correctness Properties}
\label{sec:path-abstr}

We prove that path abstraction is well-defined and monotonically absorbing.

\subsection{Well-definedness}

We first prove that path-abstracting an MRM over any subset of its state space yields another MRM.
The idea of the proof is that we can break up the underlying language of the path according to its two fragments using~\Cref{thm:inverse-images-red-perp-star}, and make use of the fact that in~\Cref{def:MRM-path-abstr} we store \emph{normalised} local ERs. %

\begin{theorem}
    \label{thm:well-defined}
    Let $\M=(S,s_0,\P,R)$ be an MRM, and $A\subseteq S$.
    Then, $\abstr(A,\M)$ is also an MRM.
\end{theorem}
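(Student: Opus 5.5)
We rely on~\cite{HM25}, which shows that $(S,s_0,\P')=\abstr(A,M)$ is a Markov chain in the sense of~\Cref{def:Markov-chain-free-monoid}. So $\P'$ already satisfies $\P'(s)=1$, $\sum_t\P'(st)\leqslant1$, and multiplicativity. It remains to check that $R'$ from~\eqref{eq:reward-function} is a reward function for $(S,s_0,\P')$ as in~\Cref{def:MRM}.

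The vanishing conditions hold by construction. If $|x|=1$ or $\P'(x)=0$, the second case of~\eqref{eq:reward-function} applies and gives $R'(x)=0$. We should also check that $R'$ is real-valued, i.e.\ that the series $\sum_{y\in(\red(A)^{-1}(x))_\perp}\P(y)R(y)$ is well defined. We assume here, as the definition implicitly does, that these local expected rewards converge absolutely (as they do whenever the relevant expected rewards of $\M$ are finite). We also note that $\P'(x)=\sum_{y\in(\red(A)^{-1}(x))_\perp}\P(y)$ by~\eqref{eq:prob-mass} whenever $\P'(x)>0$. This holds because $\P'(x)>0$ forces the second case of~\eqref{eq:Markov-chain-path-abstr}.

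The main step is additivity. Let $x,y\in S^*$ and $s\in S$ with $\P'(xsy)>0$. By multiplicativity of $\P'$ we then have $\P'(xs)>0$ and $\P'(sy)>0$, and both words have length at least $2$ unless $x$ or $y$ is empty. If $x=\varepsilon$, the claim $R'(sy)=R'(s)+R'(sy)$ follows from $R'(s)=0$, and symmetrically for $y=\varepsilon$. So assume $x,y\neq\varepsilon$. Write $L_1=(\red(A)^{-1}(xs))_\perp$, $L_2=(\red(A)^{-1}(sy))_\perp$ and $L=(\red(A)^{-1}(xsy))_\perp$. By~\Cref{thm:inverse-images-red-perp-star} we have $L=L_1\star L_2$, and every $u\in L_1$ ends in $s$ while every $v\in L_2$ starts with $s$. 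Hence each element of $L$ has the form $u\star v$ with $\P(u\star v)=\P(u)\P(v)$. Moreover, $R(u\star v)=R(u)+R(v)$ whenever $\P(u\star v)>0$, by the MRM axioms of $\M$; when $\P(u\star v)=0$ the term contributes nothing anyway.

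The obstacle I expect is showing that the decomposition $(u,v)\mapsto u\star v$ from $L_1\times L_2$ to $L$ is a \emph{bijection}, so that sums over $L$ can be rewritten as double sums without double counting. I would argue injectivity as follows. Suppose $u\star v=u'\star v'$ with $u,u'\in L_1$. Since both $u$ and $u'$ end in $s$ and are prefixes of the same word, one is a prefix of the other. Prefix-minimality of $L_1$ then forces $u=u'$, and hence $v=v'$. With the bijection in hand, we compute
\begin{align*}
\sum_{w\in L}\P(w)R(w)&=\sum_{u\in L_1}\sum_{v\in L_2}\P(u)\P(v)\bigl(R(u)+R(v)\bigr)\\
&=\P'(sy)\sum_{u\in L_1}\P(u)R(u)+\P'(xs)\sum_{v\in L_2}\P(v)R(v).
\end{align*}
Here the rearrangement is justified by absolute convergence. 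Dividing by $\P'(xsy)=\P'(xs)\P'(sy)$ gives $R'(xsy)=R'(xs)+R'(sy)$. This is exactly the point where the \emph{normalisation} in~\Cref{def:MRM-path-abstr} is needed: without it the cross factors $\P'(sy)$ and $\P'(xs)$ would not cancel. This establishes that $\abstr(A,\M)$ is an MRM.
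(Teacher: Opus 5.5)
Your proof is correct and follows essentially the same route as the paper's. The Markov-chain part comes from \cite{HM25}, vanishing comes from the second case of~\eqref{eq:reward-function}, and additivity comes from the decomposition $L=L_1\star L_2$ of \Cref{thm:inverse-images-red-perp-star}, with the cross factors cancelling thanks to normalisation. Your injectivity argument for $(u,v)\mapsto u\star v$ and your handling of terms with $\P(u\star v)=0$ spell out two points the paper passes over silently.

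The absolute convergence you assume does not need to be a hypothesis, because it holds automatically for finite $S$. Every positive-probability word in a piece $aA^*b$ of such a language moves through $\U$, where the restricted matrix $T$ has spectral radius $<1$ (cf.\ \Cref{sec:num-recipe}). So probabilities decay geometrically in $|w|$, while $|R(w)|\leqslant(|w|-1)\max_{c,d\in S}|R(cd)|$ grows only linearly.
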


\begin{proof}
Write $(S,s_0,\P',R')=\abstr(A,\M)$.
Then, $(S,s_0,\P')$ is a Markov chain by~\cite[Lemma 2]{HM25}:
We are thus left with establishing that $R'$ satisfies the defining properties in~\Cref{def:MRM} of a reward function for the Markov chain $(S,s_0,\P')$.

\emph{Proof that $R'(x)=0$ for all $x\in S^+$ with $|x|=1$ or $\P'(x)=0$: }for all such $x$, we are in the second case of~\eqref{eq:reward-function} which automatically gives $R'(x)=0$.

\emph{Proof that $R'(xsy)=R'(xs)+R'(sy)$ for all $x,y\in S^*$ and $s\in S$ for which $\P'(xsy)>0$: }let $x,y\in S^*$ and $s\in S$ be such that $\P'(xsy)>0$.
If $x=\varepsilon$ then $xsy=sy$, and, by~\eqref{eq:reward-function}, $R'(xs)=0$, so $R'(xsy)=R'(xs)+R'(sy)$. Similarly, one proves $R'(xsy)=R'(xs)+R'(sy)$ if $y=\varepsilon$.
Thus, assume that $x,y\in S^+$.
Then, $xsy$, $xs$, and $sy$ have lengths greater than $1$, so for each of these three sequences we can use the first case of~\eqref{eq:reward-function} to write their images under $R'$.
Thus,
\begin{subequations}\label{eq:well-defined}
\begin{align}
    R'(xsy)&=\frac{1}{\P'(xsy)}\sum_{z\in(\red(A)^{-1}(xsy))_\perp}\hspace{-3ex}\P(z)R(z)\label{eq:well-defined-1}
    \\&=\frac{1}{\P'(xsy)}\sum_{u\in(\red(A)^{-1}(xs))_\perp}\sum_{v\in(\red(A)^{-1}(sy))_\perp}\hspace{-3ex}\P(u\star v)R(u\star v)\label{eq:well-defined-2}
    \\&=\frac{1}{\P'(xsy)}\sum_{u\in(\red(A)^{-1}(xs))_\perp}\sum_{v\in(\red(A)^{-1}(sy))_\perp}\hspace{-3ex}\P(u)\P(v)(R(u)+R(v))\label{eq:well-defined-3.1}
    \\&=\frac{1}{\P'(xs)\P'(sy)}\sum_{u\in(\red(A)^{-1}(xs))_\perp}\hspace{-3ex}\P(u)R(u)\sum_{v\in(\red(A)^{-1}(sy))_\perp}\hspace{-3ex}\P(v)\label{eq:well-defined-4.1}
    \\&+\frac{1}{\P'(xs)\P'(sy)}\sum_{v\in(\red(A)^{-1}(sy))_\perp}\hspace{-3ex}\P(v)R(v)\sum_{u\in(\red(A)^{-1}(xs))_\perp}\hspace{-3ex}\P(u)\label{eq:well-defined-4.2}
    \\&=\frac{1}{\P'(sy)}R'(xs)\P'(sy)+\frac{1}{\P'(xs)}R'(sy)\P'(xs)\label{eq:well-defined-5}
    \\&=R'(xs)+R'(sy),
\end{align}
\end{subequations}
where~\eqref{eq:well-defined-1} follows from~\eqref{eq:reward-function}, ~\eqref{eq:well-defined-2} follows from~\Cref{thm:inverse-images-red-perp-star},~\eqref{eq:well-defined-3.1} follows from the fact that for all such $u$ and $v$ we have $\P(u\star v)=\P(u)\P(v)$ and $R(u\star v)=R(u)+R(v)$ because $(S,s_0,\P,R)$ is an MRM and $u\in S^*s$ and $v\in s S^*$ by~\eqref{eq:inverse-images-red-perp-star-link} (where $\Sigma=S$ and $a=s$),~\eqref{eq:well-defined-4.1} and~\eqref{eq:well-defined-4.2} follow from the fact that $\P'(xsy)=\P'(xs)\P'(sy)$ because $(S,s_0,\P')$ is a Markov chain, and~\eqref{eq:well-defined-5} follows from~\eqref{eq:Markov-chain-path-abstr}---as $|xs|,|sy|>1$ and $\P'(xs),\P'(sy)>0$---and~\eqref{eq:reward-function}.
\qed
\end{proof}

\subsection{Monotonic Absorption}

Now, we prove that the monotonic absorption property of path abstraction, which was established to hold for Markov chains as proven in~\cite[Thm.\ 3]{HM25}, also holds for MRMs.
We first repeat a smaller, technical lemma.

\begin{lemma}
\label{lem:spades}
    Let $M=(S,s_0,\P)$ be a Markov chain, and let $A\subseteq B\subseteq S$.
    Let $x\in(S\setminus(B\setminus\inp(M,B)))^*$.
    Let $y\in(\red(B)^{-1}(x))_\perp$ and assume that $\red(A,y)=y$ and $y\notin(S\setminus(A\setminus\inp(M,A)))^*$.
    Then we have $\P(\red(A)^{-1}(y))=0$.
    \begin{proof}
        This is proven in~\cite[App. B]{HM25arxivversion} in the proof of~\cite[Thm. 3]{HM25arxivversion}.
            \end{proof}
\end{lemma}

\begin{theorem}
\label{thm:MRM-monab}
    Let $\M=(S,s_0,\P,R)$ be an MRM, and let $A\subseteq B\subseteq S$.
    Then, we have the \emph{monotonic absorption} property
    \begin{equation}
        \abstr(B,\abstr(A,\M))=\abstr(B,\M).
    \end{equation}
\end{theorem}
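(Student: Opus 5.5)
The underlying Markov chains of both sides agree by \cite[Thm.\ 3]{HM25}. Write $\M_A=\abstr(A,\M)=(S,s_0,\P_A,R_A)$, and write $\abstr(B,\M_A)=(S,s_0,\P'',R'')$ and $\abstr(B,\M)=(S,s_0,\P',R')$. Then $\P''=\P'$, and it remains to show $R''=R'$. Take $x\in S^+$. If $|x|=1$ or $\P'(x)=0$, both rewards vanish by the second case of~\eqref{eq:reward-function}. So assume $|x|>1$ and $\P'(x)=\P''(x)>0$. In particular $x\in(S\setminus(B\setminus\inp(M,B)))^*$, since otherwise $\P'(x)=0$. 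Since both rewards carry the same prefactor $1/\P'(x)$, it suffices to prove
\[
\sum_{y\in(\red(B)^{-1}(x))_\perp}\P_A(y)R_A(y)=\sum_{z\in(\red(B)^{-1}(x))_\perp}\P(z)R(z).
\]

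For the right-hand side I will use \Cref{thm:inverse-images-red-union} to split the index set as the disjoint union, over $y\in(\red(B)^{-1}(x))_\perp$ with $\red(A,y)=y$, of the sets $(\red(A)^{-1}(y))_\perp$. This gives
\[
\sum_{y\in(\red(B)^{-1}(x))_\perp,\ \red(A,y)=y}\ \ \sum_{z\in(\red(A)^{-1}(y))_\perp}\P(z)R(z).
\]
Now compare the left-hand side term by term, sorting each $y\in(\red(B)^{-1}(x))_\perp$ into one of three cases.

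\emph{Case 1: $\red(A,y)\neq y$.} Then $y$ contains two consecutive letters of $A$. I will argue that $\P_A(y)=0$: such a $y$ has a factor $ab$ with $a,b\in A$, and $\P_A(ab)=\P(\red(A)^{-1}(ab))$ is zero or undefined-zero by \Cref{ex:red-2state-char}, or else $ab$ lies outside $(S\setminus(A\setminus\inp))^*$. Multiplicativity of $\P_A$ then gives $\P_A(y)=0$, so the term contributes nothing. These $y$ are also absent from the right-hand index set, so the two sides match here.

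\emph{Case 2: $\red(A,y)=y$ and $y\notin(S\setminus(A\setminus\inp(M,A)))^*$.} Here \Cref{lem:spades} gives $\P(\red(A)^{-1}(y))=0$, so the inner sum on the right vanishes, since every $z$ in it has $\P(z)=0$. On the left, $\P_A(y)=0$ by the third case of~\eqref{eq:Markov-chain-path-abstr}. Both sides contribute zero.

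\emph{Case 3: the remaining $y$.} If $\P_A(y)=0$, the right inner sum is also zero, because $\P_A(y)=\P(\red(A)^{-1}(y))$. If $\P_A(y)>0$, the first case of~\eqref{eq:reward-function} gives
\[
\P_A(y)R_A(y)=\sum_{z\in(\red(A)^{-1}(y))_\perp}\P(z)R(z),
\]
which is exactly the inner sum.

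Summing over all three cases yields the desired equality. Two subtleties need checking. First, the sums may be infinite, so I must make sure rearrangement is legitimate: the series converge absolutely because probabilities are nonnegative and the rewards are finite expectations. Second, the edge case $|y|=1$ does not arise, since $|x|\geqslant2$ forces $|y|\geqslant2$.

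I expect the main obstacle to be Case 1, and specifically justifying that $\P_A(y)=0$ whenever $y$ has consecutive $A$-letters, which depends on the input-state bookkeeping; in \cite{HM25} this was handled inside the proof of Thm.~3. Since the probability parts coincide by that theorem, I would try to extract the needed zero-probability facts directly from its proof, exactly as \Cref{lem:spades} does.
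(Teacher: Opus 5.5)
Your proof is correct and follows the paper's argument essentially step for step. You reduce to comparing $\sum_{y}\P_A(y)R_A(y)$ with $\sum_z\P(z)R(z)$ over $(\red(B)^{-1}(x))_\perp$, let the $y$ with $\red(A,y)\neq y$ or $y\notin(S\setminus(A\setminus\inp(M,A)))^*$ vanish on the left, kill the latter on the right via \Cref{lem:spades}, match the remaining terms through the first case of~\eqref{eq:reward-function}, and regroup with \Cref{thm:inverse-images-red-union}.

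The only minor difference is your Case~1. There the paper just notes that $\red(A)$ is idempotent, so $\red(A)^{-1}(y)=\varnothing$ and hence $\P_A(y)=0$ directly. So the obstacle you anticipate involves no input-state bookkeeping at all, and your multiplicativity argument via a factor $ab$ with $a,b\in A$ is already a complete, if roundabout, proof of that step.
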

The idea of the proof is to exploit~\Cref{thm:inverse-images-red-union} just like was done for monotonic absorption for Markov chains as done in~\cite[Thm. 3]{HM25}, together with again the fact that in~\Cref{def:MRM-path-abstr} we store \emph{normalised} local ERs. %
\begin{proof}
    Write $(S,s_0,\P_A,R_A)=\abstr(A,\M)$, $(S,s_0,\P_B,R_B)=\abstr(B,\M)$,
    and write $(S,s_0,\P_{A,B},R_{A,B})=\abstr(B,\abstr(A,\M))$.
    We have to show that $\P_B=\P_{A,B}$ and $R_B=R_{A,B}$.
    However, $\P_B=\P_{A,B}$ follows directly from the monotonic absorption property of Markov chains as established in~\cite[Thm.\ 3]{HM25}.
    What remains to show is that $R_B=R_{A,B}$.
    To this end, let $x\in S^+$.
    If $|x|=1$ or $\P_{A,B}(x)=\P_B(x)=0$, then by~\Cref{def:MRM-path-abstr} we obtain $R_B(x)=0$ and $R_{A,B}(x)=0$ and thus $R_B(x)=R_{A,B}(x)$.
    We therefore assume that $x\in S^+$ satisfies $|x|>1$ and $\P_{A,B}(x)=\P_B(x)>0$.
    We prove that $\P_{A,B}(x)R_{A,B}(x)=\P_B(x)R_B(x)$ to show that $R_{A,B}(x)=R_B(x)$.
    We have
    \begin{subequations}\label{eq:monab}
    \begin{align}
        \P_{A,B}(x)R_{A,B}(x)
        &=\sum_{y\in(\red(B)^{-1}(x))_\perp}\P_A(y)R_A(y)\label{eq:monab-1}
        \\&=\sum_{\substack{y\in(\red(B)^{-1}(x))_\perp\colon\\\red(A,y)=y\land y\in(S\setminus(A\setminus\inp(M,A)))^*}}\P_A(y)R_A(y)\label{eq:monab-2}
        \\&=\sum_{\substack{y\in(\red(B)^{-1}(x))_\perp\colon\\\red(A,y)=y\land y\in(S\setminus(A\setminus\inp(M,A)))^*}}\sum_{z\in(\red(A)^{-1}(y))_\perp}\P(z)R(z)\label{eq:monab-3}
        \\&=\sum_{\substack{y\in(\red(B)^{-1}(x))_\perp\colon\\\red(A,y)=y}}\sum_{z\in(\red(A)^{-1}(y))_\perp}\P(z)R(z)\label{eq:monab-4}
        \\&=\sum_{z\in(\red(B)^{-1}(x))_\perp}\P(z)R(z)\label{eq:monab-5}
        \\&=\P_B(x)R_B(x).\label{eq:monab-6}
    \end{align}
    \end{subequations}
    Here,~\eqref{eq:monab-1} follows from~\Cref{def:MRM-path-abstr} for the abstraction of $\abstr(A,\M)$ with respect to $B$.
    The fact that~\eqref{eq:monab-2} holds is seen as follows.
    Given $y\in(\red(B)^{-1}(x))_\perp$ we have $\red(B,y)=x$ so $|y|>1$ as $|x|>1$ (as $\red$ always outputs a word of length at most that of its second argument),
    meaning that $\P_A(y)=0$ if $\red(A,y)\neq y$ (in which case $\red(A)^{-1}(y)=\varnothing$ as $\red(A)$ is idempotent) but also $\P_A(y)=0$ if $y\notin(S\setminus(A\setminus\inp(M,A)))$ which follows directly from~\Cref{def:Markov-chain-path-abstr}.
    Then,~\eqref{eq:monab-3} holds because if $y\in(\red(B)^{-1}(x))_\perp$ is such that $\red(A,y)=y\land y\in(S\setminus(A\setminus\inp(M,A)))^*$
    we have $\P_A(y)R_A(y)=\sum_{z\in(\red(A)^{-1}(y))_\perp}\P(z)R(z)$.
    Indeed: if $\P_A(y)>0$ then this is immediate from~\Cref{def:MRM-path-abstr}, but if $\P_A(y)=0$ then $\P_A(y)R_A(y)=0$ but we also have $|y|>1$ and $y\in(S\setminus(A\setminus\inp(M,A)))^*$, so $\P_A(y)=0$ could only have been the result from $\P(\red(A)^{-1}(y))=0$,
    which means that $\P(z)=0$ for all $z\in(\red(A)^{-1}(y))_\perp$ due to the definition of probability mass of languages as given in~\eqref{eq:prob-mass}
    and hence $\sum_{z\in(\red(A)^{-1}(y))_\perp}\P(z)R(z)=0$ as well.
    Then,~\eqref{eq:monab-4} holds because of~\Cref{lem:spades}: if $y\in(\red(B)^{-1}(x))_\perp$ satisfies $\red(A,y)=y$ but $y\notin(S\setminus(A\setminus\inp(M,A)))^*$, then $\P(\red(A)^{-1}(y))=0$ by~\Cref{lem:spades} which is applicable because $x\in(S\setminus(B\setminus\inp(M,B)))^*$ as $\P_B(x)>0$ and $|x|>1$---cf.~\Cref{def:Markov-chain-path-abstr}---meaning that $\P(z)=0$ for all $z\in(\red(A)^{-1}(y))_\perp$ so $\sum_{z\in(\red(A)^{-1}(y))_\perp}\P(z)R(z)=0$.
    Finally,~\eqref{eq:monab-5} follows directly from~\Cref{thm:inverse-images-red-union},
    and~\eqref{eq:monab-6} is by definition of $R_B$ (\Cref{def:MRM-path-abstr}).
\qed
\end{proof}

\begin{remark}
    Adding or removing \emph{isolated}\footnote{A state $s\in S^*$ is called \emph{isolated} if $rs$ and $st$ have probability zero for all $r,t\in S$.} states to or from an abstraction set has no effect on the abstraction.
    Since we do not draw isolated states, \Cref{thm:MRM-monab} in practice implies that, whenever we abstract over a superset of states that have ever been coloured gray \emph{up to now}, then we obtain the same MRM as if we immediately abstracted over the set of all states that will in the end have been coloured.
    In ~\Cref{subfig:M,subfig:M-1,subfig:M-12,subfig:M-123} this means that
    $\M_{1,2,3}=\abstr(\set{s_1,\ldots,s_6},\M)$, thus that $\M_{1,2,3}$ is the result of fully model checking $\M$, even though we never abstracted fully over $\set{s_1,\ldots,s_6}$ to compute $\M_{1,2,3}$.
\end{remark}

\section{Numerical Recipe}
\label{sec:num-recipe}

We now provide a numerical recipe for computing path abstractions of MRMs in terms of solving sets of linear equations, which we turn into a high-level reference implementation in the computer algebra system PARI/GP~\cite{PARI2} in \Cref{sec:ref_impl}.

Throughout, we fix an MRM $\M=(S,s_0,\P,R)$ and write $M=(S,s_0,\P)$ for the underlying Markov chain of $\M$.
Furthermore, we fix a set $A\subseteq S$ and consider the abstracted MRM $\M'=(S,s_0,\P',R')=\abstr(A,\M)$.
We determine $R'(st)$ for all $s,t\in S$, which gives the full function $R'\colon S^* \to \RR$:
if a reward function is known on all state sequences of length $2$, then it is known on all (non-empty) state sequences---conforming to the second axiom in~\Cref{def:MRM-path-abstr}.

First, in~\Cref{subsec:num-recipe-results}, we state the main results for the correctness of the reference implementation in~\Cref{listing:ref-impl}; then, we provide the proofs in~\Cref{subsec:num-recipe-proofs}.

\subsection{Results}
\label{subsec:num-recipe-results}

Throughout, we write $\I\defeq\inp(M,A)=\set{s\in A\colon s=s_0\lor\P((S\setminus A)s)>0}$ for the set of input states of $A$ as in~\Cref{def:Markov-chain-path-abstr}, and $\O\defeq\set{s\in S\setminus A\colon\P(As)>0}$ for the set of states outside of $A$ that are reachable from $A$ in one step.
Let $\U\defeq\set{r\in A\colon\P(rA^*\O)>0}$
be the set of all states in $A$ from which we can reach states outside of $A$,
and $\U_1\defeq\set{r\in A\colon \P(r\O)>0}$ the set of all states in $A$ from which we can reach the outside of $A$ in one step.
Note that $\U_1\subseteq\U$.
Furthermore, $\U$ can be computed efficiently by means of a standard reachability algorithm, as shown in~\cite[Lemma 3]{HM25}.
We now express $R'$ fully in terms of the aforementioned state sets,
on state sequences of length $2$.

\begin{lemma}
\label{lem:R'-2state}
    Let $s,t\in S$. Then, we have
    \begin{equation}
    \label{eq:R'-2state}
        R'(st)=\begin{cases}
            R(st),&\text{if }s\in S\setminus A\land t\in S\setminus(A\setminus\I),\\
            \frac{1}{\P'(st)}\sum_{y\in sA^*t}\P(y)R(y),&\text{if }s\in\I\cap\U\land t\in\O\land\P'(st)>0,\\
            0,&\text{otherwise}.
        \end{cases}
    \end{equation}
\end{lemma}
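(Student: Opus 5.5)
Starting from \eqref{eq:reward-function} with $x=st$, the proof splits along \Cref{ex:red-2state-char}, which describes $L'=\red(A)^{-1}(st)$ in each case. If $\P'(st)=0$, then $R'(st)=0$ by definition, and we check that this agrees with \eqref{eq:R'-2state}. The first case is the one to watch: there $\P'(st)=0$ means that $R(st)=0$ must hold.

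We then assume $\P'(st)>0$. By \eqref{eq:Markov-chain-path-abstr}, this forces $s,t\in S\setminus(A\setminus\I)$ and $\P(\red(A)^{-1}(st))>0$. We go through the four subcases of \Cref{ex:red-2state-char}.

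\emph{(i) $s,t\in A$.} Then $L'=\varnothing$, so $\P'(st)=0$, which contradicts $\P'(st)>0$. Such pairs always land in the ``otherwise'' case, where $R'(st)=0$ holds.

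\emph{(ii) $s\notin A$ and $t\notin A$.} Then $L'=\set{st}$ is a prefix code. Hence $\P'(st)=\P(st)$ and $R'(st)=\frac{1}{\P(st)}\P(st)R(st)=R(st)$. If instead $\P'(st)=0$, then $\P(st)=0$, and $R(st)=0$ by \Cref{def:MRM}. So the first case of \eqref{eq:R'-2state} holds in both situations.

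\emph{(iii) $s\notin A$ and $t\in A$.} Then $L'=stA^*$, whose prefix-minimal part is $\set{st}$. So $\P'(st)=\P(st)$ whenever $t\in\I$, and the same computation gives $R'(st)=R(st)$. If $t\in A\setminus\I$, then $\P((S\setminus A)t)=0$, so $\P(st)=0$, $\P'(st)=0$, and also $R(st)=0$. Both descriptions then give the value required in \eqref{eq:R'-2state}: when $t\in\I$ it is the first case, and when $t\in A\setminus\I$ it is the ``otherwise'' case, where $\P'(st)=0$ gives $R'(st)=0$.

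\emph{(iv) $s\in A$ and $t\notin A$.} Then $L'=sA^*t$, which is already a prefix code, since each of its words ends with its first letter outside $A$. The first line of \eqref{eq:reward-function} is therefore literally $\frac{1}{\P'(st)}\sum_{y\in sA^*t}\P(y)R(y)$. It remains to see that $\P'(st)>0$ forces $s\in\I\cap\U$ and $t\in\O$:
\begin{itemize}
\item $s\in\I$ holds by the support condition in \eqref{eq:Markov-chain-path-abstr}.
\item Some $y\in sA^*t$ has $\P(y)>0$. Its last transition goes from a state of $A$ into $t$, so $t\in\O$.
\item The same $y$ witnesses $\P(sA^*\O)>0$, so $s\in\U$.
\end{itemize}
Conversely, if any of these conditions fails, the whole sum vanishes, $\P'(st)=0$, and $R'(st)=0$ matches the ``otherwise'' case.

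The main obstacle is bookkeeping. We must check that the three cases of \eqref{eq:R'-2state} are mutually exclusive and jointly cover exactly the pairs handled above. In particular, the first case includes pairs with $\P(st)=0$, where we use $R(st)=0$ from \Cref{def:MRM}. We also need the positivity of $\P(sA^*t)$ to rest only on $\P$ of words, not on rewards.
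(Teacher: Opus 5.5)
Your proof is correct and takes essentially the same route as the paper: you read off $\red(A)^{-1}(st)$ and its prefix-minimal part from \Cref{ex:red-2state-char}, unfold \eqref{eq:Markov-chain-path-abstr} and \eqref{eq:reward-function}, and use $R(st)=0$ when $\P(st)=0$. Your split into the four subcases, including the argument that $\P'(st)>0$ with $s\in A$, $t\notin A$ forces $s\in\I\cap\U$ and $t\in\O$, just spells out the ``otherwise'' case that the paper dismisses as straightforward.
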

We are thus interested in the second case of~\eqref{eq:R'-2state}.
We provide a numerical recipe in terms of solving a system of linear equations to calculate $R'(st)$ in that case; the set of this system's unknowns is exactly $\U$ and, upon solved, can be used for all choices of $s\in\I\cap\U$ and $t\in\O$ simultaneously.
We assume that $\O$ is non-empty---otherwise, the computation of $R'$ on $2$-state sequences is trivial.

To this end, we let $T\in\RR^{\U\times\U}$ denote the matrix carrying the probabilities of the Markov chain $M=(S,s_0,\P)$ between states in $\U$; thus, $T(u,v)=\P(uv)$.
For $i,j\in\U$, we let $T_{ij}\in\RR^{\U\times\U}$ denote the matrix obtained from $T$ by $T_{ij}(u,v)=T(u,v)$ if $ij=uv$ and $T_{ij}(u,v)=0$ otherwise---so, $T=\sum_{i,j\in\U}T_{ij}$.
Finally, since $T$ has spectral radius $<1$, as argued in (the proof of)~\cite[Thm. 5]{HM25arxivversion},
we can sum all its powers $Q\defeq\1+T+T^2+\ldots$ and obtain the matrix $Q\in\RR^{\U\times\U}$ by inverting the matrix $\1-T$. In other words, $Q=(\1-T)^{-1}$.

We now state the main result relating path abstraction to pure linear algebra.

\begin{theorem}
\label{thm:main-num-result}
    If $s\in\I\cap\U$ and $t\in\O$ and $\P'(st)>0$, then we have
    \begin{subequations}\label{eq:main-num-result}
    \begin{align}
    \P'(st)R'(st)&=\sum_{r\in\U_1}\sum_{i,j\in\U\colon R(ij)\neq0}(QT_{ij}Q)(s,r)\P(rt)R(ij)\label{eq:main-num-result-1}
        \\&+Q(s,s)\P(st)R(st)\label{eq:main-num-result-2}
        \\&+\sum_{i\in\U\setminus\set{s}\colon R(it)\neq0}(Q-\1)(s,i)\P(it)R(it).\label{eq:main-num-result-3}
    \end{align}
    \end{subequations}
\end{theorem}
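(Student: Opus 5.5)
We start from the second case of \Cref{lem:R'-2state}, which for $s\in\I\cap\U$, $t\in\O$ and $\P'(st)>0$ gives $\P'(st)R'(st)=\sum_{y\in sA^*t}\P(y)R(y)$. The plan is to rewrite this sum in terms of expected visiting times of transitions.

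First, only words whose letters in $A$ all lie in $\U$ matter. If $y=sa_1\cdots a_kt$ has $\P(y)>0$, then each $a_m$ reaches $t\in\O$ through $A^*$, so $a_m\in\U$. Likewise $s\in\U$, and the last internal state (or $s$ itself when $k=0$) lies in $\U_1$. So the sum ranges over $y\in s\U^*t$.

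Next we use additivity of $R$ from \Cref{def:MRM}: for $\P(y)>0$ we have $R(y)=\sum_{m}R(y_my_{m+1})$ over consecutive letter pairs of $y$, that is, $R(y)=\sum_{a,b\in S}|y|_{ab}R(ab)$. Swapping the order of summation gives
\begin{equation*}
\sum_{y\in s\U^*t}\P(y)R(y)=\sum_{a,b}R(ab)\sum_{y\in s\U^*t}|y|_{ab}\P(y).
\end{equation*}
Exchanging the sums is legitimate because every inner series is dominated by entries of $Q$, which converge since $T$ has spectral radius $<1$. The pairs $ab$ with $|y|_{ab}>0$ split into two kinds. Either $a,b\in\U$ (internal transitions $ij$), or $b=t$ and $a\in\U$ (final transitions $it$). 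Note $t\notin A$, and $t$ occurs only as the last letter.

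\emph{Internal pairs $ij$.} Each occurrence of $ij$ in $y=s\cdots t$ factors $y$ uniquely as $u\,ij\,v$ with $u\in s\U^*$ ending before $i$, and so on. Summing $\P$ over all such factorisations gives: the sum over paths from $s$ to $i$ inside $\U$, namely $\sum_n T^n(s,i)=Q(s,i)$; times $\P(ij)$; times the sum over paths from $j$ inside $\U$ to some $r\in\U_1$, namely $Q(j,r)$; times $\P(rt)$. This yields $\sum_r (QT_{ij}Q)(s,r)\P(rt)$, which is term \eqref{eq:main-num-result-1}. The precise bookkeeping is the identity $\sum_{y\in s\U^*t}|y|_{ij}\P(y)=\sum_{r\in\U_1}Q(s,i)T(i,j)Q(j,r)\P(rt)$, proved by the bijection between occurrences and triples $(u,v,w)$ with $y=(u i)\star(ij)\star(j w)$, using multiplicativity of $\P$.

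\emph{Final pairs $it$.} Here $|y|_{it}\in\{0,1\}$, and it equals $1$ exactly when the letter before $t$ is $i$. So $\sum_y|y|_{it}\P(y)=Q(s,i)\P(it)$. For $i=s$ this gives \eqref{eq:main-num-result-2}. For $i\neq s$ we have $Q(s,i)=(Q-\1)(s,i)$, which gives \eqref{eq:main-num-result-3}. Terms with $R(\cdot)=0$ are dropped trivially.

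The main obstacle is the internal-pair counting identity: the bijection between factor occurrences and path triples, together with justifying the interchange of the countable sums. For the interchange, note that everything is absolutely convergent, since $\sum |y|_{ij}\P(y)$ is bounded by finite entries of $QTQ$, so Fubini for series applies. Finally, I would check separately that words $y$ with $\P(y)=0$ contribute nothing, even though $R$ need not be additive on them.
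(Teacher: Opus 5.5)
Your proposal is correct and follows essentially the same route as the paper. Both start from the second case of \Cref{lem:R'-2state}, restrict to $s\U^*t$, split $R(y)$ into $\sum|y|_{ij}R(ij)$ to obtain expected visiting times (\Cref{lem:rew-ito-evt}), and identify these with $(QT_{ij}Q)(s,r)\P(rt)$, $Q(s,s)\P(st)$ and $(Q-\1)(s,i)\P(it)$. The differences are cosmetic: you get $Q(s,i)T(i,j)Q(j,r)$ by a direct prefix/suffix factorisation of each occurrence of $ij$, where the paper stratifies by length and position via $\evt^{(l)}_{ij}$ and $T^aT_{ij}T^b$, and you justify the interchange of sums by absolute convergence, which the paper leaves implicit.
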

Before we prove~\Cref{thm:main-num-result}, we establish two intermediate results.
The first enables us to rewrite the sum in the second case of~\eqref{eq:R'-2state} as a linear combination of rewards of just \emph{transitions}.
Thus, fix $s\in\I\cap\U$ and $t\in\O$ and suppose that $\P'(st)>0$.
We set, for all $i\in\U$ and $j\in\Ut$,\!\footnote{Here, $\evt_{ij}$ is the EVT, the \emph{expected number of visits}, of $ij$ in the sub-MRM.}
\begin{equation}
\label{eq:evt-def}
\evt_{ij}\defeq\sum_{y\in s\U^*t}\P(y)\cdot|y|_{ij},
\end{equation}
where $|y|_{ij}$ is the number of times $ij$ appears as a factor in $y$, cf.~\Cref{def:factor-occurrences}.

\begin{lemma}
\label{lem:rew-ito-evt}
    We have
    \begin{equation}
    \label{eq:rew-ito-evt}
        \P'(st)R'(st)=\sum_{i\in\U\land j\in\Ut}\evt_{ij}\cdot R(ij).
    \end{equation}
\end{lemma}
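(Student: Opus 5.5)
We start from the second case of \Cref{lem:R'-2state}, which gives $\P'(st)R'(st)=\sum_{y\in sA^*t}\P(y)R(y)$. The plan has three steps: restrict this sum to paths inside $\U$, decompose each path's reward into its transition rewards, and exchange the order of summation.

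\emph{Step 1 (restriction to $s\U^*t$).} We show that every $y\in sA^*t$ with $\P(y)>0$ lies in $s\U^*t$. Write $y=s\,a_1\cdots a_k\,t$. Each prefix $a_1\cdots a_m$ of the interior is followed by a positive-probability path through $A$ into $t\in\O$. Hence each $a_m$ satisfies $\P(a_mA^*\O)>0$, so $a_m\in\U$; the same argument gives $s\in\U$. Terms with $\P(y)=0$ contribute nothing. We therefore get $\sum_{y\in sA^*t}\P(y)R(y)=\sum_{y\in s\U^*t}\P(y)R(y)$. Note also that $sA^*t$ is a prefix code, since $t\notin A$, so no $\perp$-subtleties arise.

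\emph{Step 2 (additive decomposition).} For $y=y_1y_2\cdots y_{l+1}$ with $\P(y)>0$, repeated application of $R(xsy)=R(xs)+R(sy)$ from \Cref{def:MRM} gives, by induction on $l$,
\begin{equation*}
R(y)=\sum_{k=1}^{l}R(y_ky_{k+1})=\sum_{a,b\in S}|y|_{ab}\,R(ab).
\end{equation*}
The induction is valid because every factor of a positive-probability word again has positive probability. For $y\in s\U^*t$, every length-2 factor $ab$ has $a\in\U$ and $b\in\U\uplus\set{t}$, so the sum runs only over $i\in\U$ and $j\in\Ut$.

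\emph{Step 3 (interchange).} Substituting Step 2 into Step 1 and swapping the sums gives $\sum_{i\in\U,j\in\Ut}R(ij)\sum_{y\in s\U^*t}\P(y)|y|_{ij}$, and the inner sum is exactly $\evt_{ij}$ by~\eqref{eq:evt-def}.

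The main obstacle is justifying the interchange, since the sum over $y$ is infinite and the rewards $R(ij)$ may have either sign. It suffices that each $\evt_{ij}$ is finite, because there are only finitely many pairs $(i,j)$; then the double series converges absolutely, as $\sum_y\P(y)|y|_{ij}|R(ij)|=\evt_{ij}|R(ij)|$. For finiteness we bound $\evt_{ij}\leqslant\sum_y\P(y)|y|$. Grouping the paths by length gives $\sum_{n}(n+2)\,(T^n v)(s)$, where $v(r)=\P(rt)$. This is finite because $T$ has spectral radius $<1$, so $\sum_n nT^n$ converges. That spectral radius fact was already invoked when defining $Q$.
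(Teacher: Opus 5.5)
Your proof is correct and follows the same route as the paper. Both arguments restrict the sum from $sA^*t$ to $s\U^*t$ via the definition of $\U$, decompose $R(y)$ additively into $\sum_{i,j}|y|_{ij}R(ij)$, and exchange the sums to obtain $\evt_{ij}$. Your absolute-convergence argument, which bounds each $\evt_{ij}$ via $\sum_n (n+2)(T^n v)(s)$ and uses that $T$ has spectral radius $<1$, adds a justification that the paper omits when it simply says ``changing the summation order''.
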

Now, for all $i\in\U$ and $j\in\Ut$ and $l\in\NN^+$, we set
\begin{equation}
    \label{eq:evt-l-def}
    \evt_{ij}^{(l)}\defeq\sum_{y\in s\U^{l-1}t}\P(y)\cdot|y|_{ij},
\end{equation}
and note that
\begin{equation}
\label{eq:evt-ito-evt-l}
\evt_{ij}=\sum_{l=1}^\infty\evt_{ij}^{(l)}.
\end{equation}
To eliminate the factors $|y|_{ij}$ in the summation in~\eqref{eq:evt-l-def}, we express each $\evt_{ij}^{(l)}$ in terms of probability masses of (potentially) overlapping languages as follows.

\begin{lemma}
\label{lem:evts-ito-prob-masses}
    Let $i\in\U$ and $j\in\Ut$. Then we have
    \begin{subequations}\label{eq:evts-ito-prob-masses}
    \begin{align}
        \evt_{ij}^{(l)}&=\sum_{a=0}^{l-1}\P(s\U^{l-1}t\cap S^aijS^{l-a-1})\quad\text{for all }l\in\NN^+,\label{eq:evts-ito-prob-masses-1}
        \\\evt_{ij}&=\begin{cases}
            \sum_{r\in\U_1}\sum_{a=0}^\infty\sum_{b=0}^\infty
            \P(s\U^{a+b}r\cap S^aijS^b)\P(rt),&i,j\in\U,\\
            \P(st)+\P(s\U^*st),&ij=st,\\
            \sum_{y\in s\U^*i}\P(y)\P(it),&i\neq s,j=t.
        \end{cases}\label{eq:evts-ito-prob-masses-2}
    \end{align}
    \end{subequations}
\end{lemma}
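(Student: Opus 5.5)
The plan is to prove \eqref{eq:evts-ito-prob-masses-1} directly from the definitions of factor occurrences and probability mass. Then I would obtain \eqref{eq:evts-ito-prob-masses-2} by summing over $l$ as in \eqref{eq:evt-ito-evt-l}, splitting on where an occurrence of $ij$ can sit inside a word of $s\U^*t$. All terms are nonnegative, so rearranging the countable sums is harmless.

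\emph{Step 1 (the identity \eqref{eq:evts-ito-prob-masses-1}).} Fix $l\in\NN^+$ and $y\in s\U^{l-1}t$, so $|y|=l+1$. By \Cref{def:factor-occurrences}, an occurrence of the length-$2$ word $ij$ in $y$ is determined by the length $a$ of its left context. Hence $|y|_{ij}$ equals the number of $a\in\set{0,\ldots,l-1}$ with $y\in S^aijS^{l-a-1}$. Substituting this count into \eqref{eq:evt-l-def} and interchanging the two finite sums gives $\sum_{a=0}^{l-1}\sum_{y\in s\U^{l-1}t\cap S^aijS^{l-a-1}}\P(y)$.

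It remains to identify the inner sum with the probability mass \eqref{eq:prob-mass}. Every language $s\U^{l-1}t\cap S^aijS^{l-a-1}$ consists of words of the same length, so it is a prefix code, equal to its prefix-minimal part. Therefore the inner sum is exactly $\P(s\U^{l-1}t\cap S^aijS^{l-a-1})$.

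\emph{Step 2 (the case analysis for \eqref{eq:evts-ito-prob-masses-2}).} The key observation is that $t\in\O\subseteq S\setminus A$ while $\U\subseteq A$. So in a word $y=su_1\cdots u_{l-1}t$ the letter $t$ occurs only at the last position.

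\emph{Case $i,j\in\U$.} Since $j\neq t$, any occurrence of $ij$ lies within the prefix $y'=su_1\cdots u_{l-1}\in s\U^{l-1}$. Hence $l\geqslant2$ and $a\leqslant l-2$. Write $r=u_{l-1}$ and $b=l-2-a$. The multiplicativity axiom of \Cref{def:Markov-chain-free-monoid} gives $\P(y)=\P(y')\P(rt)$. The factor $\P(rt)$ vanishes unless $r\in\U_1$, because $r\in A$ and $t\in\O$. Summing over $l\geqslant 2$ and $0\leqslant a\leqslant l-2$ is the same as summing over all $a,b\in\NN$. Splitting on the last letter $r$ therefore yields the triple sum in the first case; the languages $s\U^{a+b}r\cap S^aijS^b$ again have fixed word length, so their masses are plain sums.

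\emph{Case $ij=st$.} The letter $t$ sits only at the end, so the only possible occurrence is at the last position, i.e.\ $a=l-1$ with the letter before $t$ equal to $s$. For $l=1$ this gives the word $st$ and contributes $\P(st)$. For $l\geqslant2$ it gives the words of $s\U^{l-2}st$. The union of these languages over $l\geqslant2$ is $s\U^*st$. This union is a prefix code, since each word contains $t$ only as its final letter. Hence its probability mass is the plain sum, and we obtain $\P(st)+\P(s\U^*st)$.

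\emph{Case $i\neq s$, $j=t$.} Again the occurrence must be the final one. Now $l\geqslant2$ is forced, since $l=1$ would give the word $st$ and $i\neq s$. Writing $y=zt$ with $z\in s\U^{l-2}i$ and using $\P(zt)=\P(z)\P(it)$, summing over $l$ gives $\sum_{y\in s\U^*i}\P(y)\P(it)$.

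The main obstacle I expect is bookkeeping rather than substance. One point is checking that each language whose mass appears is indeed a prefix code, so that \eqref{eq:prob-mass} really reduces to the plain sum. The other is getting the reindexing $(l,a)\mapsto(a,b)$ right in the first case, including the boundary values of $a$ and the restriction to $r\in\U_1$. I would handle both by the fixed-length and final-$t$ arguments above.
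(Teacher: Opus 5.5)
Your proposal is correct and follows essentially the same route as the paper: you count occurrences of $ij$ by their left-context length to get \eqref{eq:evts-ito-prob-masses-1}, then sum over $l$. In the case $i,j\in\U$ you split off the final transition $rt$ with $r\in\U_1$ and reindex $(l,a)\mapsto(a,b)$, and in the two cases with $j=t$ you use that $t$ occurs only as the last letter. Your explicit checks that the languages involved are prefix codes, so that the probability masses in \eqref{eq:prob-mass} reduce to plain sums, spell out a point the paper leaves implicit.
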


\subsection{Proofs}
\label{subsec:num-recipe-proofs}

We need to prove~\Cref{lem:R'-2state,thm:main-num-result,lem:rew-ito-evt,lem:evts-ito-prob-masses}.
First, \Cref{lem:R'-2state} is easy to prove following its case distinction---see \ifarxivelse{\Cref{sec:AppProofs}}{App.\ A}.
We then use it to prove~\Cref{lem:rew-ito-evt,lem:evts-ito-prob-masses}; ultimately, we use~\Cref{lem:rew-ito-evt,lem:evts-ito-prob-masses} to prove~\Cref{thm:main-num-result}.

\begin{proof}[of~\Cref{lem:rew-ito-evt}]
    In the context of~\Cref{lem:rew-ito-evt}, we had already fixed $s\in\I\cap\U$ and $t\in\O$ and assumed $\P'(st)>0$.
    We are thus in the second case of~\eqref{eq:R'-2state} which we know holds because we just proved~\Cref{lem:R'-2state}.
    We thus have
    \begin{subequations}\label{eq:rew-ito-evt-proof}
    \begin{align}
        \P'(st)R'(st)
        &=\sum_{y\in sA^*t}\P(y)R(y)\label{eq:rew-ito-evt-proof-1}
        \\&=\sum_{y\in s\U^*t}\P(y)R(y)\label{eq:rew-ito-evt-proof-2}
        \\&=\sum_{y\in s\U^*t}\P(y)\sum_{i\in\U\land j\in\Ut}|y|_{ij}\cdot R(ij)\label{eq:rew-ito-evt-proof-3}
        \\&=\sum_{i\in\U\land j\in\Ut}\left(\sum_{y\in s\U^*t}\P(y)\cdot|y|_{ij}\right)R(ij)\label{eq:rew-ito-evt-proof-4}
        \\&=\sum_{i\in\U\land j\in\Ut}\evt_{ij}\cdot R(ij),\label{eq:rew-ito-evt-proof-5}
    \end{align}
    where~\eqref{eq:rew-ito-evt-proof-1} follows from the second case of~\eqref{eq:R'-2state};~\eqref{eq:rew-ito-evt-proof-2} follows from the definition of $\U$ in~\Cref{subsec:num-recipe-results};~\eqref{eq:rew-ito-evt-proof-3} follows from the fact that, given $y\in s\U^*t$, $R(y)$ is the sum of the rewards on all $2$-state factors contained in $y$ which can be calculated as the sum over all $2$-state sequences $ij\in\U(\Ut)$ of the reward $R(ij)$ times the number $|y|_{ij}$ of times $ij$ occurs as a factor in $y$---cf.~\Cref{def:factor-occurrences}---as all possible $2$-state factors of $y\in s\U^*t$ are in $\U(\Ut)$ because $s\in\U\not\ni t$;~\eqref{eq:rew-ito-evt-proof-4} follows from changing the summation order; and~\eqref{eq:rew-ito-evt-proof-5} is by definition of $\evt_{ij}$ in~\eqref{eq:evt-def}.
    \end{subequations}
    \qed
\end{proof}

\vspace{-16pt}
\begin{proof}[of~\Cref{lem:evts-ito-prob-masses}]
    Let $i\in\U$ and $j\in\Ut$.

    We first prove~\eqref{eq:evts-ito-prob-masses-1}.
    Thus, let $l\in\NN^+$.
    Starting from~\eqref{eq:evt-l-def}, we obtain
    \begin{equation}
    \label{eq:evt-l-deriv}
        \evt_{ij}^{(l)}=\sum_{k=1}^lk\cdot\P(\set{y\in s\U^{l-1}t\colon|y|_{ij}=k})
    \end{equation}
    because $|y|=l+1$ for all $y\in s\U^{l-1}t$ so no $y\in s\U^{l-1}t$ can contain $ij$ as a factor more than $l$ times.
    Now, to the sum of~\eqref{eq:evt-l-deriv}, any $y\in s\U^{l-1}t$ contributes its probability mass exactly $|y|_{ij}$ times and hence is in exactly $|y|_{ij}$ of the sets $s\U^{l-1}t\cap S^aijS^{l-a-1}$ where $a=0,\ldots,l-1$.
    Conclude that the right-hand side of~\eqref{eq:evt-l-deriv}
    is equal to the right-hand side of~\eqref{eq:evts-ito-prob-masses-1}, and thus that~\eqref{eq:evts-ito-prob-masses-1} holds.

    Next, we prove~\eqref{eq:evts-ito-prob-masses-2}, of which we first handle the ``hard'' case $i,j\in\U$.
    Since $j\in\U$ but $t\notin\U$, it follows that $\evt_{ij}^{(1)}=0$ and $(\clubsuit)$ for all integers $l\geqslant2$ we have $\evt_{ij}^{(l)}=\sum_{a=0}^{l-2}\P(s\U^{l-1}t\cap S^aijS^{l-a-1})$ using~\eqref{eq:evts-ito-prob-masses-1}. It follows that
    \begin{subequations}\label{eq:evt-deriv-hard-case}
    \begin{align}
        \evt_{ij}
        &=\sum_{l=1}^\infty\evt_{ij}^{(l+1)}\label{eq:evt-deriv-hard-case-1}
        \\&=\sum_{l=1}^\infty\sum_{a=0}^{l-1}\P(s\U^lt\cap S^aijS^{l-a})\label{eq:evt-deriv-hard-case-2}
        \\&=\sum_{l=1}^\infty\sum_{a=0}^{l-1}\sum_{r\in\U_1}\P(s\U^{l-1}r\cap S^aijS^{l-a-1})\P(rt)\label{eq:evt-deriv-hard-case-3}
        \\&=\sum_{r\in\U_1}\sum_{a=0}^\infty\sum_{b=0}^\infty\P(s\U^{a+b}r\cap S^aijS^b)\P(rt),\label{eq:evt-deriv-hard-case-4}
    \end{align}
    \end{subequations}
    where~\eqref{eq:evt-deriv-hard-case-1} follows from~\eqref{eq:evt-ito-evt-l} and the fact that $\evt_{ij}^{(1)}=0$;~\eqref{eq:evt-deriv-hard-case-2} follows from $(\clubsuit)$;~\eqref{eq:evt-deriv-hard-case-3} follows by definition of $\U_1$ in~\Cref{subsec:num-recipe-results}; and~\eqref{eq:evt-deriv-hard-case-4} holds by the substitution $b=l-a-1$ and a change of the summation order.

    Finally, we verify the latter two cases of~\eqref{eq:evts-ito-prob-masses-2}. To this end, we notice that in both these cases we have $j=t$ so that each $y\in s\U^*t$ satisfies $|y|_{ij}\leqslant1$ as $s\in\U\not\ni t$,
    and thus that $\evt_{ij}=\P(\set{y\in s\U^*t\cap S^*it})$---cf.~\eqref{eq:evt-def}.
    In case $i=s$, which is the second case of~\eqref{eq:evts-ito-prob-masses-2}, we have $\P(\set{y\in s\U^*t\cap S^*it})=\P(\set{y\in s\U^*t\cap S^*st})=\P(st)+\P(s\U^*st)$, proving that case;
    in case $i\neq s$, which is the third case of~\eqref{eq:evts-ito-prob-masses-2}, we have $\P(\set{y\in s\U^*t\cap S^*it})=\P(s\U^*it)=\sum_{y\in s\U^*t}\P(y)\P(it)$, proving that case as well.
    \qed
\end{proof}

\begin{proof}[of~\Cref{thm:main-num-result}]
    Note that to prove~\Cref{thm:main-num-result} it is, by~\eqref{eq:rew-ito-evt} and~\eqref{eq:evts-ito-prob-masses-2},
    sufficient to prove that the three contributions in~\eqref{eq:main-num-result-1},~\eqref{eq:main-num-result-2}, and~\eqref{eq:main-num-result-3} correspond to summing the $\evt_{ij}$'s in the three cases of~\eqref{eq:evts-ito-prob-masses-2}, respectively.

    Thus, first we prove that $(QT_{ij}Q)(s,r)=\sum_{a=0}^\infty\sum_{b=0}^\infty\P(s\U^{a+b}r\cap S^aijS^b)$
    for all $i,j\in\U$ and $r\in\U_1$. This is seen as $Q=\1+T+T^2+\ldots$, so given $a,b\in\NN$ it suffices to show that $(T^aT_{ij}T^b)(s,r)=\P(s\U^{a+b}r\cap S^aijS^b)$.
    Clearly, as $s,r\in\U$, $\P(s\U^{a+b}r)$ is the probability mass of all sequences of length $a+b+2$ starting in $s$, moving through $\U$, and ending in $r$, which is classically given by the $(s,r)$-th entry of $T^{a+b+1}$, as $T\in\RR^{\U\times\U}$ was defined to encode the probability masses of transitions between states in $\U$ via $T(u,v)\defeq\P(uv)$; but now, we have to replace the $(a+1)$-th matrix in the $(a+b+1)$-fold product of $T$ with itself by $T_{ij}$---where we recall that $T_{ij}(u,v)=\P(uv)$ if $uv=ij$ and zero otherwise---to calculate $\P(s\U^{a+b}r\cap S^aijS^b)$, as this is the probability mass of all sequences in $s\U^{a+b}r$ whose $(a+1)$-th $2$-state factor is forced to be $ij$.

    Then, we prove that $Q(s,s)=1+\sum_{y\in s\U^*s}\P(y)$.
    Since $Q=\1+T+T^2+\ldots$ we have $Q(s,s)=1+T(s,s)+T^2(s,s)+\ldots=1+\sum_{l=1}^\infty\sum_{y\in s\U^{l-1}s}\P(y)=1+\sum_{y\in s\U^*s}\P(y)$, which proves the claim.

    Finally, we show that for $i\in\U\setminus\set{s}$,
    we have $(Q-\1)(s,i)=\sum_{y\in s\U^*i}\P(y)$.
    We have $(Q-\1)(s,i)=T(s,i)+T^2(s,i)+\ldots=\sum_{l=1}^\infty\sum_{y\in s\U^{l-1}i}\P(y)=\sum_{y\in s\U^*i}\P(y)$,
    and we are done.
    \qed
\end{proof}

\section{Reference Implementation}
\label{sec:ref_impl}

\lstdefinelanguage{gp}
{
    morecomment=[l]{\\},
    morecomment=[s]{/*}{*/},
    morekeywords={if,while,prod,sum, concat,for, vecmin,factorint, matsize,gcd, vector, break, matrix, matid, matsolve, vecextract, vecsum, pathAbstr,pathAbstrMRM}
}
\definecolor{lightgray}{rgb}{0.95,0.95,0.95}
\definecolor{darkgray}{rgb}{0.3,0.3,0.3}
\lstset{
    classoffset=0,
    language={gp},
    breaklines=true,
    basicstyle=\fontsize{7.75pt}{9.29pt}\ttfamily,
    linewidth=\textwidth,
    backgroundcolor=\color{white},
    captionpos=b, %
    extendedchars=true, %
    tabsize=2, %
    columns=fixed, %
    xleftmargin=0.0\textwidth,
    xrightmargin=0.0\textwidth,
    keepspaces=true, %
    showstringspaces=false, %
    breaklines=true, %
    frame=rl  bt, %
    keywordstyle=\color{black}\bfseries,%
    framesep=4pt, %
    numbers=left, %
    numberstyle=\tiny\ttfamily, %
    commentstyle=\color{gray}, %
    stringstyle=\color{red}, %
    identifierstyle=\color{black},
}
\counterwithout{lstlisting}{section}
\begin{lstlisting}[float=!tb, abovecaptionskip=8pt, label={listing:ref-impl}, caption={An implementation in PARI/GP~\cite{PARI2} of path abstraction for MRMs}]
pathAbstrMRM = A -> (MRM ->  \
	P = MRM[1]; R = MRM[2]; n = matsize(P)[1]; \
	II = vector(n, i, A[i] && (i == 1 || \
		vecsum(vector(n, j, !A[j] * P[j, i])) != 0)); \
	OO = vector(n, i, !A[i] && \
		vecsum(vector(n, j, A[j] * P[j, i])) != 0); \
	Uvec = [OO]; Vvec = [OO]; \
	i = 1; \
	while (Uvec[i] != vector(n), \
		Uvec = concat(Uvec, [vector(n, j, A[j] && !Vvec[i][j] && \
			vecsum(vector(n, k, Uvec[i][k] * P[j, k])) != 0)]); \
		Vvec = concat(Vvec, [vector(n, j, Vvec[i][j] || Uvec[i+1][j])]); \
		i++; \
	); \
	U1 = Uvec[2]; U = vector(n, j, Vvec[i][j] && !OO[j]); \
	PU = matrix(n, n, i, j, (U[i] && U[j]) * P[i, j]); \
	Q = matadjoint(matid(n) - PU) / matdet(matid(n) - PU); \
	P_A = matrix(n, n, i, j, if(!A[i] && (!A[j] || II[j]), \
		P[i, j], if(II[i] && U[i] && OO[j], Q[i,] * P[, j], 0))); \
	R_A = matrix(n, n, i, j, if(!A[i] && (!A[j] || II[j]), R[i, j], 0)); \
	for (h = 1, n, if (II[h] && U[h], \
		for (j = 1, n, if (OO[j] && P_A[h, j] != 0, \
			R_A[h, j] = (1 / P_A[h, j]) * (sum(i = 1, n, \
				if(U1[i], sum(a = 1, n, \
				if(U[a], sum(b = 1, n, \
				if(U[b] && R[a, b] != 0, \
				(Q*matrix(n, n, i0, j0, i0 == a && j0 == a) \
					* PU * matrix(n, n, i0, j0, i0==b && j0==b)*Q)[h, i] \
					* P[i, j]*R[a, b], 0)), 0)), 0)) \
				+ Q[h, h] * P[h, j] * R[h, j] \
				+ sum(a = 1, n, if(U[a] && a != h && R[a, j] != 0, \
				(Q-matid(n))[h, a]*P[a, j]*R[a, j], 0))); \
		)); \
	)); \
	[P_A, R_A]; \
);
\end{lstlisting}

We include a high-level reference implementation of path abstraction for MRMs in the computer algebra system PARI/GP~\cite{PARI2}, extending~\cite[Sect.\ 5: Listing~1]{HM25} from reachability probabilities for Markov chains to ERs for MRMs. %
MRMs are represented as a vector containing a probability matrix and a matrix carrying the transition rewards.
The initial state is $1$. Columns and rows corresponding to states that do not play a role (anymore) are fully zero.

The inner workings of~\Cref{listing:ref-impl} are as follows, following the notation in~\Cref{sec:num-recipe}.
The ``hard'' case for the computation of $R'(st)$ in~\eqref{eq:R'-2state} and expressed in terms of linear algebra in~\Cref{thm:main-num-result} is implemented in~\Cref{listing:ref-impl} in lines 22-33.
All that the rest of the reference implementation is handling is the computation of the sets $\I$, $\O$, $\U$, and $\U_1$ (\verb!II!, \verb!OO!, \verb!U!, and \verb!U1!) in lines 3-16, of the matrices $T$ and $Q$ (\verb!PU! and \verb!Q!) in lines 17-18, and the handling of the easy cases for $R'(st)$ (cf. cases 1 and 3 in~\eqref{eq:R'-2state}) in lines 19-21.
The remaining lines are 34-37 (ends of loops and output of the routine), and lines 1-2 (header of routine, plus initializing the probability- and reward matrices \verb!P! and \verb!R!).
Note that the hardest computational step, that of inverting the matrix $\1-T$ on line 18, only has to be done once; then, we can simply iterate over all combinations of $s\in\I\cap\U$ and $t\in\O$ with $\P'(st)>0$ (lines 22-23) to compute $R'$ while (re-)using the same $Q$.

\section{Conclusion and Future Work}
\label{sec:Conclusion}

We extended path abstraction from reachability probabilities on DTMCs to expected rewards on MRMs.
Working throughout in the free monoid view, we proved that path abstracting an MRM over an arbitrary set of states again yields an MRM and that it remains monotonically absorbing.
We gave a numerical recipe that reduces one abstraction to solving one linear equation system, provided a reference implementation in PARI/GP, and proved it correct by way of the connection between expected rewards and visiting times.

Several directions remain for future work,
the most immediate being an efficient implementation.
This could be a standalone tool that reads a Markov chain in the UMB format~\cite{AHJ+26} plus an abstraction set and writes the result back to a file, or a model checking engine inside an established tool such as \tool{mcsta} of the \toolset~\cite{HH14} that solves the arising linear equation systems with (possibly exact) methods.
For the latter to be effective, we must find good heuristics for choosing abstraction sets, ideally improving on the SCC-based decomposition of \'{A}brah\'{a}m \etal~\cite{AJWKB10}.
A second direction is to extend our work towards continuous-time Markov chains with time-bounded properties.
Finally, we would like to formalise the free monoid view of Markov chains, path abstraction, and the theorems of this paper and~\cite{HM25} in an interactive theorem prover such as Isabelle/HOL, to machine-check our proofs and ultimately derive a correct-by-construction implementation, as recently done for interval iteration~\cite{KSAHL25}.

\bibliographystyle{splncs04}
\bibliography{paper}

\iftoggle{arxiv}{%
\clearpage
\appendix
\crefalias{section}{appendix} %
\crefalias{subsection}{appendix} %

\section{Additional Proofs}
\label{sec:AppProofs}

\begin{proof}[of~\Cref{lem:R'-2state}]
    If $s\in S\setminus A$ and $t\in S\setminus(A\setminus\I)$,
    then we have $\red^{-1}(A)(st)=stA^*$ if $t\in \I$ and $\red^{-1}(A)(st)=st$ otherwise---cf.~\Cref{ex:red-2state-char}.
    In both cases, $(\red(A)^{-1}(st))_\perp=st$.
    Since $s,t\notin A\setminus\I$, we are in the second case of~\eqref{eq:Markov-chain-path-abstr} so $\P'(st)=\P(st)$.
    If $\P'(st)=\P(st)=0$ then $R'(st)=0$ and $R(st)$ because rewards are zero on state sequences with probability zero---cf.~\Cref{def:MRM}---and if $\P'(st)=\P(st)>0$ then by~\eqref{eq:reward-function} we obtain $R'(st)=R(st)$ directly as $|st|>1$ and $(\red^{-1}(A)(st))_\perp=\set{st}$.

    Now, if $s\in\I\cap\U$ and $t\in\O$ satisfy $\P'(st)>0$,
    then we are in the first case of~\eqref{eq:reward-function}.
    Note that here $\red(A)^{-1}(st)=sA^*t$ and $(sA^*t)_\perp=sA^*t$ because $s\in A\not\ni t$.
    It follows that $R'(st)=\frac{1}{\P'(st)}\sum_{y\in sA^*t}\P(y)R(y)$.

    In all other cases, it is straightforward to verify that $R'(st)=0$.
    \qed
\end{proof}

}{}

\end{document}